\newtheorem{theo}{Theorem}[section]
\newtheorem{prop}[theo]{Proposition}
\theoremstyle{remark}
\newtheorem{rem}[theo]{Remark}
\theoremstyle{remark}
\newtheorem{ex}[theo]{Example}
\newcommand{\vc}[1]{\boldsymbol{#1}}
\newcommand{\diag}{\textrm{diag}}
\definecolor{darkred}{rgb}{0.6,0,0}
\begin{document}

\title{Sensitivity analysis of a branching process evolving on a network with application in epidemiology}

\shorttitle{Sensitivity analysis of a branching process evolving on a network} 
\shortauthorlist{S. Hautphenne, G. Krings, J.-C. Delvenne, and V. D. Blondel} 

\author{
\name{Sophie Hautphenne$^*$}
\address{University of Melbourne, Department of Mathematics and Statistics, Melbourne, VIC 3010, Australia. \email{$^*$Corresponding author: sophiemh@unimelb.edu.au}}
\name{Gautier Krings}
\address{Institute for Information and Communication Technologies, Electronics and Applied
Mathematics (ICTEAM), Universit\'e catholique de Louvain, Avenue Georges Lema\^itre, 4,
1348 Louvain-la-Neuve, Belgium}
\name{Jean-Charles Delvenne}
\address{Institute for Information and Communication Technologies, Electronics and Applied
Mathematics (ICTEAM), Universit\'e catholique de Louvain, Avenue Georges Lema\^itre, 4,
1348 Louvain-la-Neuve, Belgium\\Center for Operation Research and Econometrics (CORE), Universit\'e catholique de Louvain, Voie du Roman Pays 34, 1348 Louvain-la-Neuve, Belgium}
\and
\name{Vincent D. Blondel}
\address{Institute for Information and Communication Technologies, Electronics and Applied
Mathematics (ICTEAM), Universit\'e catholique de Louvain, Avenue Georges Lema\^itre, 4,
1348 Louvain-la-Neuve, Belgium}}

\maketitle

\begin{abstract}
{We perform an analytical sensitivity analysis for a model of a continuous-time branching process evolving on a fixed network. This allows us to determine the relative importance of the model parameters to the growth of the population on the network. We then apply our results to the early stages of an influenza-like epidemic spreading among a set of cities connected by air routes in the United States. 
We also consider vaccination and analyze the sensitivity of the total size of the epidemic with respect to the fraction of vaccinated people. Our analysis shows that the epidemic growth is more sensitive with respect to transmission rates within cities than travel rates between cities. More generally, we highlight the fact that branching processes offer a powerful stochastic modeling tool with analytical formulas for sensitivity which are easy to use in practice.}
 
 
 {branching process; diffusion on network; sensitivity analysis; influenza; epidemic model}
\end{abstract}


\section{Introduction}

Branching processes are powerful stochastic models that describe the evolution of populations of individuals which reproduce and die independently of each other according to specific probability laws. They are playing an increasingly important role in models of population biology including molecular biology, ecology, epidemiology, and evolutionary theory \cite{axelrod2002branching}, as well as in other scientific areas such as particle physics, chemistry, and computer science \cite{haccou2005branching}. Typical performance measures of these models include the distribution of the instantaneous and cumulative population sizes at a given time, the extinction probability of a population, and its asymptotic growth rate and composition.

 In order to apply branching processes to real-world problems, the parameters of the model must be estimated from available data sets. Small errors or changes in the parameters may lead to notably different model outputs. A sensitivity analysis may quantify the impact of each parameter on the performance measures of the model. This may be useful if we want to know which parameters influence the growth of the branching process the most. To the best of our knowledge, sensitivity analysis has received scant attention in the branching processes literature. Apart from a few papers dealing with very specific questions such as \cite{caron2006branching} and \cite{fox2005extinction}, there is no complete study of the topic for more general classes of branching processes. 
 
 Here we consider a continuous-time Markovian branching process evolving on a fixed network. Such a process belongs to the class of \textit{Markovian trees}, which can be seen as particular multitype branching processes, and offer considerable modelling versatility and appealing computational properties, see \cite{bkt08}, \cite{hautphenne2012markovian} and \cite{hautphenne2011algorithmic}. The concept of a branching process evolving on a network is certainly not new, but our aim in this paper is to present 
 branching processes and their properties in an accessible way, and to show how to derive practical, analytic sensitivity formulas for the main quantities of interest. For that purpose, we first discuss the typical performance measures of the model (as listed earlier), and we then perform an analytical sensitivity analysis of each performance measure with respect to the model parameters. As opposed to the performance measures themselves, most of the sensitivity formulas have an explicit expression that can be easily used in practical situations. 
 
We then illustrate our theoretical sensitivity results on a topical real-world problem: the spread of an influenza-like epidemic on a network of cities in the United States through air traffic. It is well known that the early stages of an epidemic can be approximated by a branching process \cite{bd93}, \cite{oneill95}. The averaged branching process is essentially the linearisation of the nonlinear \textit{Susceptible-Infected-Removed} (\textit{SIR}) compartmental model around the disease-free equilibrium. We refer to Hethcote \cite{heth00} and the book of Keeling and Rohani \cite{keeling2008modeling} for a good introduction to the modeling of infectious diseases, to Arino \cite{arino2009diseases} for an overview on diseases in metapopulations, and to Balcan et al. \cite{balcan10} for a recent computational model for the spatial spread of infectious diseases.

The sources of errors in the outputs of a model of epidemic evolving on a network are manifold. They may arise either from the simplifying assumptions of the model, which necessarily neglect many features of the real world, or from imprecise epidemiological or mobility data. 
The former is addressed in the most sophisticated models by taking into account a large number of explanatory variables or compartments, such as age, gender, location, etc., and modeling the dynamics in a complex nonlinear way. The mobility data, at a global level or at the level of a large country such as the United States, is often estimated from the number of passengers flying from one airport to the other. These data are easily available and are assumed to account for a large part of the mobility. 
A discussion on the relevance of these data is included in \cite{balcan09}. The epidemiological data are however much harder to estimate accurately. For instance, the average number of infectious contacts that an individual infected by seasonal influenza makes in a day have been estimated to range from $0.55$ to $1.44$, see \cite{gekg04}. 
  

Sensitivity of epidemic models to parameters has already been studied for different types of diseases: Hyman and LaForce \cite{hyman2003modeling} studied the sensitivity of a multi-city deterministic epidemic model for influenza with respect to the epidemiological data. They assumed that the transmission and recovery rates are the same in each city and show that the recovery rate is the most important single parameter. Chitnis \textit{et al.} \cite{chitnis2008determining} perform a sensitivity analysis on a non-linear compartmental model of malaria transmission, and obtain an analytical formula for the basic reproduction number $R_0$ and for its sensitivity.

In this paper, we compute the sensitivity of the branching process approximation with respect to both epidemiological and domestic mobility data.
We find that the results are significantly more sensitive with respect to the epidemiological data than with respect to the domestic mobility data. This provides confidence in the use of the general approximation of domestic mobility by airport traffic, but suggests that the most stringent limitation of current models is in the relatively imprecise epidemiological data, rather than in the sophistication of the model or the estimation of domestic traffic. This is in contrast with the role of international air travel, which can have a more important impact on the development of some diseases such as H1N1 \cite{khan2009spread}. 
 
Finally, we refine our sensitivity analysis with respect to the epidemiological parameters by studying the effect of a vaccination campaign. 
 The sensitivity analysis of the epidemic size with respect to the vaccination rate allows us to compute a practically relevant quantity: the impact of each supplementary vaccination shot on the total number of infected people, when the vaccination has succeeded in stopping the exponential spreading of the disease. In this regime the stochastic branching process approximation is valid at all times \cite{bd93}.
 
In an appendix, we also introduce the possibility of on-board transmission. Infection during transportation has been analyzed from a theoretical point of view for the \emph{SIR} model in \cite{knipl2012multiregional}, but the authors have not applied the model to real situations.  Studies on the transmission rates on board airplanes exist (see for instance the report of the European Centre for Disease Prevention and Control \cite{ecdc}), however, to the best of our knowledge, data are not sufficient to provide accurate on-board transmission functions for influenza. Therefore we assume that the transmission probabilities are given by a specific function of the flight time, and we show how this assumption affects the sensitivity results. We show in particular that considering on-board transmission can noticeably modify the shape of the sensitivity curves.
 
 The paper is organised as follows. In Section 2, we describe the model of branching process evolving on a network and we discuss several performance measures which can be computed from the model. In Section 3, we define the sensitivity and the elasticity of a model output with respect to a parameter, and we derive analytical expressions for the sensitivity of each performance measure discussed in Section 2. We end the section by an illustration of the sensitivity results on a simple artificial example. In Section 4, we apply the sensitivity results on a model for the early spread of an influenza-like epidemic on a network of cities in the United States. Finally, we introduce vaccination and perform a sensitivity analysis with respect to the proportion of vaccinated people. We conclude our paper in Section 5. The technical details and supplementary information are provided in six appendices. 
 
 Throughout the paper, column vectors are denoted by $\vc x$ and row vectors are denoted by $\vc x^\top$. The column vector $\vc e_i$ denotes the unit vector with a 1 at the $i$th entry and zeros elsewhere, and $\vc 1$ and $\vc 0$ are column vectors of which all elements are respectively equal to one and zero; the size of these vectors is generally clear from the context.

\section{A model of branching process evolving on a network} \label{bpn}

We consider a fixed network represented by an undirected graph $(\mathcal{G},\mathcal{E})$ where $\mathcal{G}=\{1,2,\ldots,n\}$ is the set of nodes and $\mathcal{E}$ is the set of vertices. Two nodes $i$ and $j$ are \textit{adjacent} when either $i=j$ or the edge $(i,j)$ belongs to $\mathcal{E}$. 

We define a continuous-time stochastic model of a population evolving on the network as follows. The process starts at time 0 with a single individual located at node $i\in\mathcal{G}$. One of the following three types of event may then happen to this individual while at node $i$: \begin{itemize}
\item The individual moves from $i$ to an adjacent node $j\neq i$; this happens at rate $T_{ij}$ (that is, $T_{ij}$ transitions from $i$ to $j$ occur on average  per time unit, per individual at node $i$).
\item The individual gives birth to $k$ children ($k\geq 1$) {and} simultaneously moves to node $j$ adjacent to $i$, the $k$ children respectively starting
their life at nodes $j_1$, $j_2,\ldots,j_k$ adjacent to $i$; this happens at rate $(B_{k})_{i;j_1\,j_2\ldots\,j_k\,j}$. 
\item The individual dies; this happens at rate $d_i$.
\end{itemize} All individuals living on the network behave independently of each other with the same rules as the initial individual. The transition rates, birth rates and death rates are gathered respectively in a $n \times n$ matrix $T=(T_{ij})$, a sequence of $n \times n^{k+1}$ matrices $B_{k}=((B_{k})_{i;j_1\,j_2\ldots\,j_k\,j})$,  $k\geq 1$, and an $n\times 1$ vector $\vc d=(d_i)$. The diagonal elements of $T$ are strictly negative and $|T_{ii}|$ is the parameter of the exponential distribution of the sojourn time of an individual at node $i$ before one of the three abovementioned events occurs. 
These elements are computed such that the matrices and vector satisfy $T\vc 1 + \sum_{k\geq1} (B_k \vc 1)+\vc d=\vc 0$.
The resulting Markovian population process is a branching process characterized by the set of matrices $\{T,\{B_k\}_{k\geq1},\vc d\}$ belonging to the class of \textit{Markovian trees}, see
\cite{bkt08} and \cite{soph5}. 

\begin{ex} We illustrate the structure of the matrices $T,\{B_k\}_{k\geq1}$ and $\vc d$ on the simplest network with $n=2$ (adjacent) nodes. The $2\times 2$ transition rate matrix $T$ is then given by
$$T=\left[\begin{array}{cc} T_{11}&T_{12}\\T_{21} & T_{22}\end{array}\right],$$ where the diagonal entries $T_{ii}$ will be described further.
Assume that the individuals of the branching process can give birth to at most two children at each birth event. This means that the birth rate matrices $B_k$ are nonzero only for $k=1$ and $k=2$. The $2\times 4$ matrix $B_1$ has the following structure
$$B_1=\left[\begin{array}{cccc} (B_1)_{1,11}& (B_1)_{1,12}& (B_1)_{1,21}& (B_1)_{1,22}\\(B_1)_{2,11}& (B_1)_{2,12}& (B_1)_{2,21}& (B_1)_{2,22}\end{array}\right],$$ and the $2\times 8$ matrix $B_2$ has the following structure:
$$B_2=\left[\begin{array}{cccccccc} (B_2)_{1,111}& (B_2)_{1,112}& (B_2)_{1,121}& (B_2)_{1,122}&(B_2)_{1,211}& (B_2)_{1,212}& (B_2)_{1,221}& (B_2)_{1,222}\\(B_2)_{2,111}& (B_2)_{2,112}& (B_2)_{2,121}& (B_2)_{2,122}&(B_2)_{2,211}& (B_2)_{2,212}& (B_2)_{2,221}& (B_2)_{2,222}\end{array}\right].$$ For instance, the entry $(B_2)_{1,122}$ is the rate at which a parent at node 1 gives birth to two children and instantaneously moves to node 2, while one of his children stays at node 1 and the second one moves to node 2. Note that the indices in the entries of $B_1$ and $B_2$ are ordered lexicographically by convention.
Finally the $2\times 1$ death rate vector is $\vc d=[d_1,d_2]^ \top$. In order for $T\vc 1 + B_1 \vc 1+B_2\vc 1+\vc d=\vc 0$ to hold, the diagonal entries of the matrix $T$ are then given by
\begin{eqnarray*}T_{11}&=&-d_1-T_{12}-\sum_{j_1,j=1}^2( B_1)_{1,j_1j}-\sum_{j_1,j_2,j=1}^2( B_2)_{1,j_1 j_2j},\\T_{22}&=&-d_2-T_{21}-\sum_{j_1,j=1}^2( B_1)_{2,j_1j}-\sum_{j_1,j_2,j=1}^2( B_2)_{2,j_1 j_2j}.\end{eqnarray*}
\end{ex}
%

In the next subsections, we describe several performance measures that can be computed for a branching process evolving on a network. For the clarity of the presentation, the performance measures will be discussed first in the scalar case ($n=1$) which corresponds to the standard Markov branching process \cite[Chapter V]{harris2002theory}, and next in the matrix case for an arbitrary number of nodes $n>1$. Note that in the scalar case, the matrices $T,\{B_k\}_{k\geq1}$ and $\vc d$ become all scalar: $B_k$ is the rate at which an individual gives birth to $k$ children, $d$ is the death rate, and $T=-d-\sum_k B_k$. 

The matrix case will require the use of the Kronecker product between matrices, which is defined as follows: for an $n\times m$ matrix $A$ and a $p\times q$ matrix $B$, the Kronecker product of $A$ and $B$, denoted by $A\otimes B$, is an $n p\times mq$ matrix defined as 
$$A\otimes B=\left[\begin{array}{cccc} A_{11} B & A_{12}B&\ldots & A_{1m}B\\A_{21} B & A_{22}B&\ldots & A_{2m}B\\\vdots & & &\vdots\\A_{n1} B & A_{n2}B&\ldots & A_{nm}B\end{array}\right].$$ It will also require the knowledge of the Perron-Frobenius Theorem for nonnegative matrices, see \cite{seneta2006non}.

Most of the results in the scalar case can be found in \cite{harris2002theory}, and most of the results in the matrix case can be found in \cite{hautphenne2012markovian} and \cite{soph5} for the case where $B_k=0$ for $k\geq 2$ (the Markovian \textit{binary} tree case).

\subsection{Instantaneous population size}

\paragraph{Scalar case $n=1$.}  Let $Z(t)$ denote the population size in the branching process at time $t$. The process $\{Z(t),t\geq 0\}$ is a continuous-time Markov chain on the nonnegative integers, where state 0 is absorbing and all other states are transient (that is, they will be visited a finite number of times with probability one). The so-called generating function of $Z(t)$ is defined as 
$Q(t,s)=\sum_{k\geq 0} P[Z(t)=k]s^k$, where $|s|\leq 1$, and is well known to satisfy the backward 
  Kolmogorov differential equation
\begin{eqnarray} \label{equdiffscal}\frac{\partial Q(t, s)}{\partial t}&= &d+T\, Q(t,s)+\sum_{k\geq 1} B_k\,Q(t,s)^{k+1}, \end{eqnarray} with $Q(0, s)=s$ if the population starts with a single individual at time 0 (which we assume here).

All the moments of $Z(t)$ can be obtained by successive derivatives of $Q(t, s)$ at $s=1$. In particular, the mean population size at time $t$, $M(t)=E[Z(t)]$, is given by $M(t)=(\partial Q(t,s)/\partial s)|_{s=1}.$ By differentiating \eqref{equdiffscal} with respect to $s$, at $s=1$, we obtain the linear differential equation for $M(t)$
\begin{equation}\label{MMscal}\frac{\partial M(t)}{\partial t}=\Omega\,M(t),\end{equation}
 with $M(0)=1$, where \begin{equation}\label{omdefscal}\Omega = T +\sum_{k\geq 1}\,B_k\,(k+1)=-d+\sum_{k\geq 1}\,k\,B_k.\end{equation}
 By solving \eqref{MMscal} we obtain that the mean population size at time $t$ is given by
\begin{equation} \label{Mtscal}M(t)=\exp(\Omega t).\end{equation}
\paragraph{Matrix case $n>1$.} We now extend the above definitions and results to the $n$-dimensional case. 
The vector $\vc Z(t)=[Z_1(t),\ldots,Z_n(t)]$ records the population size at time $t$ at each of the $n$ nodes. The evolution of the branching process now depends on the node occupied by the initial individual at time 0. We therefore define the \textit{conditional} probability generating function of $\vc Z(t)$, given that the process starts with a first individual at node $i$, as$$Q_i(t,\vc s)=\sum_{\vc k\geq\vc 0} P[\vc Z(t)=\vc k | \vc Z(0)=\vc e_i] \vc s^{\vc k},$$ where $\vc k=(k_1,\ldots,k_n)\in \mathbb{N}^n$, $\vc s=(s_1,\ldots,s_n)$, $|s_i|\leq 1$, and $\vc s^{\vc k}:=s_1^{k_1}\cdots s_n^{k_n}$. 
As shown in \cite{soph5}, the vector function $\vc Q (t,\vc s)=(Q_i(t,\vc s))$ satisfies the matrix analogue of \eqref{equdiffscal},
\begin{eqnarray} \label{equdiff}\frac{\partial \vc Q(t,\vc s)}{\partial t}&= &\vc d+T\, \vc Q(t,\vc s)+\sum_k B_k\,\vc Q(t,\vc s)^{(k+1)}, \end{eqnarray} with $\vc Q(0,\vc s)=\vc s$, where
$\vc Q(t,\vc s)^{(k+1)} $ stands for the $(k+1)$st-fold Kronecker
product of the vector $\vc Q(t,\vc s)$ with itself: $\vc Q(t,\vc s)^{(0)} = 1,$ and
$\vc Q(t,\vc s)^{(k)}= {\vc Q(t,\vc s)}^{(k-1)} \otimes \vc Q(t,\vc s)$, for $ k\geq 1$. 

The mean population size at time $t$ is now given by a matrix $M(t)=(M_{ij}(t))$, where $M_{ij}(t)=E[Z_j(t)|\vc Z(0)=\vc e_i]$ is the (conditional) mean number of individuals at node $j$ in the population at time $t$, given that the process started at time $0$ with one individual at node $i$. The entries of $M(t)$ being obtained as $M_{ij}(t)=(\partial Q_i(t,\vc s)/\partial s_j)|_{\vc s=\vc1}$, the same differential equation \eqref{MMscal} and solution \eqref{Mtscal} hold for the matrix $M(t)$, the only difference being that $M(0)=I$, and $\Omega$ is now a matrix given by
 \begin{equation}\label{omdef}\Omega = T +\sum_{k\geq 1}\,B_k\,\sum_{i=0}^k( \vc 1^{(i)}\otimes I \otimes \vc 1^{(k-i)}).\end{equation} Note that the matrix exponential is defined as $$e^{\Omega t}=\sum_{n\geq0} \dfrac{\Omega^n t^n}{n!}.$$

\subsection{Cumulative population size}
\paragraph{Scalar case $n=1$.}Let $N(t)$ be the cumulative number of individuals born in the branching process until time $t$, and let $G(t,s)$ be the probability generating function of $N(t)$. Similar to $Q(t, s)$, $G(t,s)$ satisfies the differential equation \begin{eqnarray} \label{eqdiff2scal}\frac{\partial  G(t,s)}{\partial t}&= & d\,s+T\,  G(t,s)+\sum_k B_k\, G(t,s)^{k+1}, \end{eqnarray} with $G(0,s)=s$. Note that the only difference with Eq. \eqref{equdiffscal} is that here $d$ is multiplied by $s$.  

Again, all the moments of $N(t)$ can be obtained by successive derivatives of $G(t,s)$ at $s=1$, and we focus here on the mean cumulative number of individuals born in the branching process until time $t$, $D(t)=E[N(t)]=(\partial G(t,s)/\partial s)|_{s=1}$. By differentiating Eq. \eqref{eqdiff2scal} with respect to $s$, at $s=1$, we obtain the following differential equation for $D(t)$:
 $$\frac{\partial D(t)}{\partial t}=d+\Omega\, D(t),$$ with $ D(0)=1$ and where the scalar $\Omega$ is defined in \eqref{omdefscal}. The mean cumulative population size until time $t$ is thus given by
\begin{equation}\label{Dtscal} D(t)=[I-\exp(\Omega t)]\,(-\Omega)^{-1}\,d+\exp(\Omega t).\end{equation} 
\paragraph{Matrix case $n>1$.}
In the $n$-dimensional case, we extend the definition of $G(t,s)$ to its vector analogue: we define the conditional probability generating function of $N(t)$, given that the process started with a first individual at node $i$, by
$$G_i(t,s)=\sum_{k\geq0} P[N(t)=k|\vc Z(0)=\vc e_i]\,s^k.$$ Note that here, $s$ is still scalar. The matrix analogue of Eq. \eqref{eqdiff2scal} for the vector $\vc G(t,s)=(G_i(t,s))$ is \begin{eqnarray} \label{eqdiff2}\frac{\partial \vc G(t,s)}{\partial t}&= &\vc d\,s+T\, \vc G(t,s)+\sum_k B_k\,\vc G(t,s)^{(k+1)}, \end{eqnarray} with $\vc G(0,s)=s\,\vc1$, see \cite{soph5}. 
Since we now condition on the node of the initial individual at time 0, the mean cumulative number of individuals born in the population until time $t$ becomes a vector $\vc D(t)=(D_i(t))$ where
$D_i(t)=E[N(t)|\vc Z(0)=\vc e_i]=(\partial G_i(t,s)/\partial s)|_{s =1}$. Using the same technique as in the scalar case, we find that the vector $\vc D(t)$ is given by the matrix analogue of \eqref{Dtscal},
\begin{equation}\label{Dt} \vc D(t)=[I-\exp(\Omega t)]\,(-\Omega)^{-1}\,\vc d+\exp(\Omega t)\,\vc 1,\end{equation}where the matrix $\Omega$ is defined in \eqref{omdef}. Note that this expression requires $\Omega$ to be nonsingular, which is generally the case in most practical situations.

\subsection{Extinction probability}
\paragraph{Scalar case $n=1$.} Due to the transient nature of the strictly positive states of $\{Z(t)\}$, the population in the branching process either eventually grows without bound or becomes extinct, there is no stationary behavior (but for trivial cases). We denote by $q$ the probability of eventual extinction of the process, that is, $q=\lim_{t\rightarrow\infty} P[Z(t)= 0]$. Since $P[Z(t)= 0]=Q(t, 0)$, an equation for $q$ can be obtained by taking $s=0$ and $t\rightarrow\infty$ in Eq. \eqref{equdiffscal}:
\begin{eqnarray} \label{exteqscal}0&= & d+T\,  q+\sum_k B_k\, q^{k+1}. \end{eqnarray} This non-linear equation has potentially more than one solution, but it can be shown that $q$ is its minimal nonnegative solution.
\paragraph{Matrix case $n>1$.} The $n$-dimensional version of $q$ is the vector $\vc q=(q_i)$ where
$q_i=\lim_{t\rightarrow\infty} P[\vc Z(t)=\vc 0|\vc Z(0)=\vc e_i]$ is the conditional extinction probability of the branching process, given that it starts with one individual at node $i$. Similar to the scalar case, we can show that the vector $\vc q$ is the (componentwise) minimal nonnegative solution of the matrix analogue of \eqref{exteqscal},
\begin{eqnarray} \label{exteq}\vc0&= &\vc d+T\, \vc q+\sum_k B_k\,\vc q^{(k+1)}. \end{eqnarray}Several algorithms with a probabilistic interpretation have been developed to solve for $\vc q$, see \cite{bkt08}, \cite{hautphenne2008newton}, and \cite{hautphenne2011algorithmic}.

\subsection{Extinction criteria} \label{extcrit}
\paragraph{Scalar case $n=1$.} Various criteria can be used to determine whether the population eventually becomes extinct with probability one or has a positive probability of growing without bound, that is, whether $q=1$ or $ q<1$. We shall consider two equivalent extinction criteria here. 

First observe that, by \eqref{Mtscal}, the mean population size $M(t)$ has different asymptotic behavior depending on the sign of $\Omega$ defined in \eqref{omdefscal}: as $t\rightarrow\infty$, $M(t)\rightarrow 0$ if $\Omega<0$ (subcritical case), $M(t)=1$ if $\Omega=0$ (critical case), or $M(t)\rightarrow \infty$ if $\Omega>0$ (supercritical case). It follows that the population eventually becomes extinct with probability one ($q=1$) if and only if $\Omega\leq 0$.

Another threshold quantity is given by the mean number of children generated by an individual during its entire lifetime, that we denote by $R$. The population has a positive probability of surviving ($q<1$), if and only if $R>1$, that is, if on average, every individual is replaced by more than one individual. In the scalar case, $R$ is the ratio of the total rate at which an individual generates a child to the death rate, that is,\begin{equation}\label{exRR}R=\dfrac{\sum_k k B_k}{d}.\end{equation} Note that the two criteria are indeed equivalent since $\Omega\leq 0\Leftrightarrow R\leq 1$.
\paragraph{Matrix case $n>1$.} The two extinction criteria described in the scalar case have a counterpart in the $n$-dimensional case. Since $\Omega$ is now a matrix (as defined in \eqref{omdef}), the condition $\Omega\leq 0$ for almost sure extinction needs to be adapted: the eigenvalue $\Omega_0$ of maximal real part of $\Omega$ (also called the Perron-Frobenius eigenvalue) now plays the role of the threshold quantity:
$$\vc q=\vc 1 \quad \Leftrightarrow \quad \Omega_0\leq 0.$$


We define the matrix $R=(R_{ij})$, where the entry $R_{ij}$ is the expected total number of children born at node $j$ from a parent born at node $i$, during the entire lifetime of the parent. The second extinction criterion then relies on the eigenvalue $R_0$ of maximal real part of the matrix $R$: $$\vc q=\vc 1 \quad \Leftrightarrow \quad R_0\leq 1.$$
An explicit expression for $R$, which generalizes \eqref{exRR} to the matrix case, is given in Proposition~\ref{matR} in Appendix~\ref{apfurther}.

The matrix $R$ and its dominant eigenvalue $R_0$ will play a fundamental role in the epidemic application, as shown in Section~\ref{app}.

\subsection{Asymptotic node frequency}\label{freq}
It may be interesting to know what proportion of the inviduals living in the network at time $t$ are located at node $i$. We also call this proportion the frequency of node $i$ at time $t$, and when $t\rightarrow\infty$ we talk about the asymptotic node frequency. Note that these notions make sense only when the network has at least two nodes; therefore, we only consider the matrix case in this section.

The matrix exponential $\exp(\Omega)$ has the Perron-Frobenius eigenvalue $\exp(\Omega_0)$. Hence, as a consequence of (\ref{Mtscal}) and Perron-Frobenius theory, the expected population size at time $t$ is asymptotically given by  ${M}(t)\sim \exp(\Omega_0 t)\,\vc{v\,u}^\top,$ as $t \rightarrow \infty$, where $\vc{u}^\top$ and $\vc{v}$ are respectively the left and right Perron-Frobenius eigenvectors of $\exp(\Omega)$; this holds provided that the dominant eigenvalue has multiplicity one, which is generally the case in most practical situations.  
 This implies that for all $i,j$,
\begin{equation} \label{frob}\dfrac{M_{ij}(t)}{({M}(t) \vc{1})_i}\sim\frac{u_j}{\vc{u}^T\vc{1}} \qquad \mbox{as $t \rightarrow \infty$,}
\end{equation}
and we may use \eqref{frob} to obtain the proportion of the population living at each node of the network as time goes to infinity.

\section{Sensitivity analysis}\label{sec:sens}

In practical situations, precise values for the parameters $\{D,\{B_k\}_{k\geq1},\vc d\}$ of the model can be difficult to obtain. It is therefore important to perform a sensitivity analysis of the model with respect to perturbations or errors in the parameters. 
This section addresses the sensitivity analysis of the measures of interest defined in the previous section with respect to the parameters of the model. 

Let $p$ be a parameter of the model (for instance
$ p=T_{ij},\; p=(B_1)_{i;jk},$ or $p=d_i$),
and let $X$ be a measure of the model (for instance $X=M_{ij}(t)$, $X=D_i(t)$, or $X=R_0$).
The \textit{sensitivity} of $X$ with respect to $p$ is defined by the local slope of $X$, considered as a function of $p$:
\begin{equation}\partial_p X=\dfrac{\partial X}{\partial p}.\end{equation} 

The scale of $X$ and $p$ may be different; it is therefore convenient to consider proportional perturbations instead of absolute ones. The proportional response to a proportional perturbation is the \textit{elasticity} (also called \textit{sensitivity index} in the context of mathematical epidemiology  \cite{chitnis2008determining}).
The elasticity
of $X$ with respect to $p$, denoted by $\Upsilon^X_p$, is defined by the ratio of the relative increase of $X$ to the relative increase of $p$:
\begin{equation}\Upsilon^X_p=\dfrac{\partial\log X}{\partial\log p} =\partial_p X\,\dfrac{p}{X}.\end{equation}
The interpretation of the elasticity is as follows: if $\Upsilon^X_p=a$, it follows that if $p$ increases by $r\,\%$, then $X$ increases (or decreases, depending on the sign of $a$) by approximately \begin{equation}\label{elasint}100[\exp(a \log(1+r*0.01))-1]\,\%
\end{equation}(see Appendix~\ref{apel} for more details). Note that for $r=1$, $\log(1.01)\approx 0.01$ so that $100[\exp(a/100)-1]\approx a$ when $|a|$ is small. 

We now derive analytical formulas for the sensitivity of the performance measures defined in the previous section. Here we focus on the matrix case only, as this is the case leading to relevant discussions.

\subsection{Sensitivity of the instantaneous population size}\label{simps}

In order to characterize the sensitivity of the population size generating function $\vc Q(t,\vc s)$, we differentiate \eqref{equdiff} with respect to $p$. This leads to a matrix linear differential equation for $\partial_p \vc Q(t,\vc s)$ which does not have any closed-form solution, see Eq. \eqref{eq28} in Appendix~\ref{apfurther}.

Recall from \eqref{Mtscal} that the mean population size matrix is $M(t)=\exp(\Omega t)$.
Derivatives of the matrix exponential have been investigated by several authors, see for instance \cite{AlMohy},  \cite{VL} and \cite{najfeld}. We present here 
a simple method which provides an exact expression for $\partial_p M(t)$ and involves the computation of a matrix exponential of size $2n\times 2n$.
Let us consider the system of differential equations  \begin{equation}\label{sys}\left\{\begin{array}{rcl}\partial_t M(t)&=&\Omega\,M(t)\\\partial_t\partial_p M(t)&=&\Omega\,\partial_pM(t)+\partial_p\Omega\,M(t),\end{array} \right.\end{equation}
where the first equation is Eq. \eqref{MMscal} satisfied by $M(t)$, and the second equation is obtained by differentiating the first equation with respect to $p$. The system \eqref{sys} may be equivalently rewritten as 
$$\partial_t \left[\begin{array}{c}\partial_pM(t)\\M(t)\end{array}\right]=\left[\begin{array}{cc} \Omega & \partial_p\Omega\\0&\Omega\end{array}\right]\cdot\left[\begin{array}{c}\partial_pM(t)\\M(t)\end{array}\right],$$with initial condition $[\partial_pM(0), M(0)]^\top=[0,I]^\top$, where $0$ and $I$ denote the $n\times n$ zero matrix and identity matrix respectively. This is a new differential equation of the form $\partial_t Y(t)=A\, Y(t)$, of which the solution is given by $Y(t)=\exp(A\,t)\,Y(0)$. Therefore,
\begin{equation}\label{sens1} \partial_pM(t)=[I,0]\cdot\exp\left(\left[\begin{array}{cc} \Omega & \partial_p\Omega\\0&\Omega\end{array}\right]\,t\right)\cdot\left[\begin{array}{c}0\\I\end{array}\right],\end{equation} where \begin{equation}\label{dpT}
\partial_p\Omega = \partial_p T +\sum_{k\geq 1}\, \partial_p B_k\,\sum_{i=0}^k( \vc 1^{(i)}\otimes I \otimes \vc 1^{(k-i)}).\end{equation}

A second method to compute $\partial_pM(t)$, which has the advantage of not requiring any matrix exponential computation, is provided in Appendix~\ref{apfurther}.


\subsection{Sensitivity of the cumulative mean population size}
Focusing now on the sensitivity analysis of the cumulative population size, we observe that, similar to $\partial_p \vc Q(t,s)$, $\partial_p \vc G(t,s)$ satisfies a linear matrix differential equation with non-constant coefficients which does not have any closed-form solution.

The explicit formula for $\partial_p \vc D(t)$ follows easily from the one for $\partial_p M(t)$. Recall the expression for $\vc D(t)$ given in \eqref{Dt} and note that $$\partial_p(-\Omega)^{-1}=(-\Omega)^{-1}\,\partial_p\Omega\,(-\Omega)^{-1}.$$ We therefore obtain
\begin{eqnarray} \label{sens22}\partial_p \vc D(t)& =  &-\partial_pM(t)\,(-\Omega)^{-1}\,\vc d+[I-\exp(\Omega\,t)]\,(-\Omega)^{-1}\,\partial_p\Omega\,(-\Omega)^{-1}\,\vc d+\partial_pM(t)\,\vc 1,\end{eqnarray} where $\partial_pM(t)$ is computed in Section~\ref{simps}.

\subsection{Sensitivity of the extinction probability} Here we derive an explicit expression for $\partial_p\vc q$ in terms of the extinction probability vector $\vc q$. By differentiating \eqref{exteq} with respect to $p$, we obtain
\begin{eqnarray*}
0&= &\partial_p\vc d+\partial_p T\, \vc q+T\,\partial_p \vc q+\sum_k \partial_pB_k\,\vc q^{(k+1)}+ \sum_k B_k\,\sum_{i=0}^k(\vc q^{(i)}\otimes I\otimes \vc q^{(k-i)})\,\partial_p\vc q,
\end{eqnarray*}so that, in the non-critical case,
\begin{equation}
\partial_p\vc q=-\Phi^{-1}\,\Big[\partial_p\vc d+\partial_p T\, \vc q+\sum_k \partial_pB_k\,\vc q^{(k+1)}\Big],
\end{equation}where $$\Phi=T+\sum_k B_k\,\sum_{i=0}^k(\vc q^{(i)}\otimes I\otimes \vc q^{(k-i)}).$$ Note that when $\vc q=\vc 1$,  $\Phi=\Omega$. The non-singularity of the matrix $\Phi$ in the non-critical case is ensured by the next proposition.

\begin{prop}If the Markovian tree is supercritical or subcritical, then the eigenvalues of $\Phi$ all have a strictly negative real part. In the critical case, $\Phi$ is singular.
\end{prop}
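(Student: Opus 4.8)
The plan is to identify $\Phi$ as the Jacobian $DF(\vc q)$ of the map $F(\vc x)=\vc d+T\vc x+\sum_k B_k\,\vc x^{(k+1)}$ whose minimal nonnegative root is the extinction vector $\vc q$, and to exploit the fact that $\Phi$ is an essentially nonnegative (Metzler) matrix. Indeed its off-diagonal entries are those of $T$ (nonnegative) augmented by a nonnegative matrix, so the Perron--Frobenius theory for Metzler matrices applies: there is a real eigenvalue $s(\Phi)$ equal to the maximal real part occurring in the spectrum. Both assertions then reduce to locating this spectral abscissa, since ``all eigenvalues have strictly negative real part'' is equivalent to $s(\Phi)<0$, while ``$\Phi$ is singular'' will follow from $s(\Phi)=0$ (the abscissa being itself an eigenvalue).

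The subcritical and critical cases are immediate. When $\Omega_0\leq 0$, the extinction criterion of Section~\ref{extcrit} gives $\vc q=\vc 1$, hence $\Phi=DF(\vc 1)=\Omega$; thus $s(\Phi)=\Omega_0$, which is strictly negative in the subcritical case and zero (so that $\Phi$ is singular) in the critical case.

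The supercritical case is where the work lies. First I would use the rate constraint $T\vc 1+\sum_k B_k\vc 1+\vc d=\vc 0$ to observe that $\vc 1$ is also a root of $F$, so both $\vc q$ and $\vc 1$ satisfy $F=\vc 0$. The fundamental theorem of calculus along the segment joining them then yields $\bar J\,(\vc 1-\vc q)=\vc 0$, where $\bar J:=\int_0^1 DF(\vc q+\theta(\vc 1-\vc q))\,d\theta$ is again Metzler and irreducible. Because an irreducible Metzler matrix admits a nonnegative eigenvector only for its spectral abscissa, the nonnegative nonzero kernel vector $\vc 1-\vc q$ forces $s(\bar J)=0$. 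Finally, since $DF(\vc x)$ is entrywise nondecreasing in $\vc x\geq\vc 0$ (its nonlinear part is a sum of nonnegative monomials weighted by the nonnegative matrices $B_k$) and $\vc q\leq\vc q+\theta(\vc 1-\vc q)\leq\vc 1$, one gets $\Phi=DF(\vc q)\leq\bar J$ entrywise, with strict inequality somewhere since the process is supercritical (so some $B_k\neq 0$) and $\vc 1-\vc q>\vc 0$. Strict monotonicity of the spectral abscissa then gives $s(\Phi)<s(\bar J)=0$.

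I expect the main obstacle to be precisely this last strict inequality $s(\Phi)<0$ rather than merely $s(\Phi)\leq 0$: it rests on the strict form of Perron--Frobenius monotonicity, which requires the dominating matrix $\bar J$ to be irreducible, and on verifying $\Phi\neq\bar J$. Both hinge on a standing irreducibility assumption on the network (already implicit in the paper's use of a simple Perron--Frobenius eigenvalue), which guarantees that $\bar J$ --- whose off-diagonal entries dominate those of $T$ --- is irreducible and, via the kernel relation, that $\vc 1-\vc q$ is in fact strictly positive. The only remaining care is to confirm that supercriticality rules out the degenerate case of no births, in which $F$ is affine and the argument trivializes.
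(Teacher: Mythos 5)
Your proof is correct, but there is nothing in the paper to match it against: the authors explicitly omit the proof, pointing instead to Theorem~6 of their cited Markovian-binary-tree reference, which covers only the case $B_k=0$ for $k\geq 2$, whereas the proposition here is stated for general $\{B_k\}_{k\geq 1}$. Measured against that literature, your argument follows the same Perron--Frobenius comparison strategy but packages it differently: the standard binary-tree argument subtracts the two fixed-point equations $F(\vc 1)=F(\vc q)=\vc 0$ and telescopes $\vc 1^{(k+1)}-\vc q^{(k+1)}=\sum_{i=0}^{k}\vc q^{(i)}\otimes(\vc 1-\vc q)\otimes\vc 1^{(k-i)}$ to exhibit a positive vector with $\Phi(\vc 1-\vc q)\leq\vc 0$, not identically zero, and concludes by sub-invariance for irreducible Metzler matrices; your mean-value matrix $\bar J=\int_0^1 DF\bigl(\vc q+\theta(\vc 1-\vc q)\bigr)\,d\theta$ is exactly the intermediate object that replaces this telescoping, and it handles all offspring sizes $k$ in one stroke, which is a genuine advantage at the level of generality of the proposition. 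Two remarks on your flagged assumptions. First, the irreducibility hypothesis is not cosmetic: without it the statement is false (a network split into a non-interacting critical component and a supercritical component is supercritical overall, yet $\Phi$ is block-diagonal with a singular block), so you are right that it must be imposed; the paper leaves it implicit. Second, your justification that $\bar J$ is irreducible because its off-diagonal entries dominate those of $T$ presupposes that $T$ alone is irreducible; if connectivity arises through births rather than migration, that argument fails, but the repair is easy and strengthens your proof: for every $\theta>0$ the point $\vc q+\theta(\vc 1-\vc q)$ is strictly positive (each $q_j\in[0,1]$ cannot be both $0$ and $1$), and the off-diagonal entries of $DF$ are polynomials with nonnegative coefficients, so $\bar J$ has the same off-diagonal zero pattern as $\Omega=DF(\vc 1)$; hence irreducibility of $\Omega$, the natural standing assumption, already suffices.
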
We omit the proof which follows the same lines as in \cite[Theorem 6]{soph5}.

\subsection{Sensitivity of the extinction criteria}\label{sensR0}

Recall that the eigenvalues of maximal real part of $\Omega$ and $R$, denoted by $\Omega_0$ and $R_0$ respectively, are key quantities to determine whether the branching process almost surely becomes extinct or not. Even though no explicit expression can be written for $\Omega_0$ and $R_0$, analytical expressions can be derived for their sensitivities $\partial_p \Omega_0$ and $\partial_p R_0$, as we show now.

Let $A$ be any generic $n\times n$ matrix (that is, $A$ represents both $\Omega$ and $R$). Let $A_0$ be the eigenvalue of maximal real part of $A$, and let $\vc u^\top$ and $\vc v$ be the left and right eigenvectors of $A$ corresponding to $A_0$, scaled such that $\vc u^\top\vc v=1.$  Then, $$\partial_p A_0= \sum_{i=1}^n\sum_{j=1}^n \dfrac{\partial A_0}{\partial A_{ij}}\dfrac{\partial A_{ij}}{\partial p},$$ where ${\partial A_0}/{\partial A_{ij}}=u_i\,v_j$ \cite[Chapter 9]{caswell1989matrix}, so that $$\partial_p A_0=\vc u^\top\,\partial_p A\,\vc v.$$ Finally, when $A=\Omega$, $\partial_p\Omega$ is given in \eqref{dpT}, and when $A=R$, we use the explicit expression \eqref{exR} for $R$ to obtain an expression for its derivative $\partial_p R$, see \eqref{dpR} in Appendix~\ref{apfurther}.

\subsection{Sensitivity of the asymptotic node frequency}

Recall from Section \eqref{freq} that $\vc u^\top$ and $\vc v$ are the Perron-Frobenius left and right eigenvectors of $\exp(\Omega)$ associated with the dominant eigenvalue $\exp(\Omega_0)$. Let $\{(\lambda_i,\vc u_i^\top, \vc v_i), 1\leq i\leq n\}$ be the full set of eigenvalues of $\exp(\Omega)$, with their corresponding left and right eigenvectors, with $\lambda_1= \exp(\Omega_0)$, $\vc u_1^\top=\vc u^\top$ and $\vc v_1=\vc v$, and assume that the pairs of eigenvectors are normalized such that $\vc u_i^\top \vc v_i=1$ for all $i$. 
Then, the sensitivity of the asymptotic node frequency is given by
\begin{equation}\partial_p\left(\dfrac{\vc u^\top}{\vc u^\top\vc 1}\right)
=\dfrac{\partial_p \vc u^\top(\vc u^\top\vc 1)-\vc u^\top\partial_p\vc u^\top\vc 1}{(\vc u^\top\vc1)^2},
\end{equation}where 
\begin{equation}\label{dpu}\partial_p \vc u^\top= \sum_{i=2}^n \dfrac{\vc u^\top (\partial_p \exp{\Omega})\, \vc v_i }{\exp(\Omega_0)-\lambda_i}\,\vc u_i^\top,\end{equation} which is obtained following
 the same arguments as in \cite[Chapter 9]{caswell1989matrix}, and where $\partial_p \exp{\Omega}$ is computed in Section~\ref{simps}.

\subsection{Illustrative example}
In this section, we illustrate the calculation of a few sensitivity results on a branching process evolving on the simple network with $n=4$ nodes depicted in Figure~\ref{ftoy}. 
\begin{figure}[t!] 
	\begin{center}
	\includegraphics[angle=0,width=3cm]{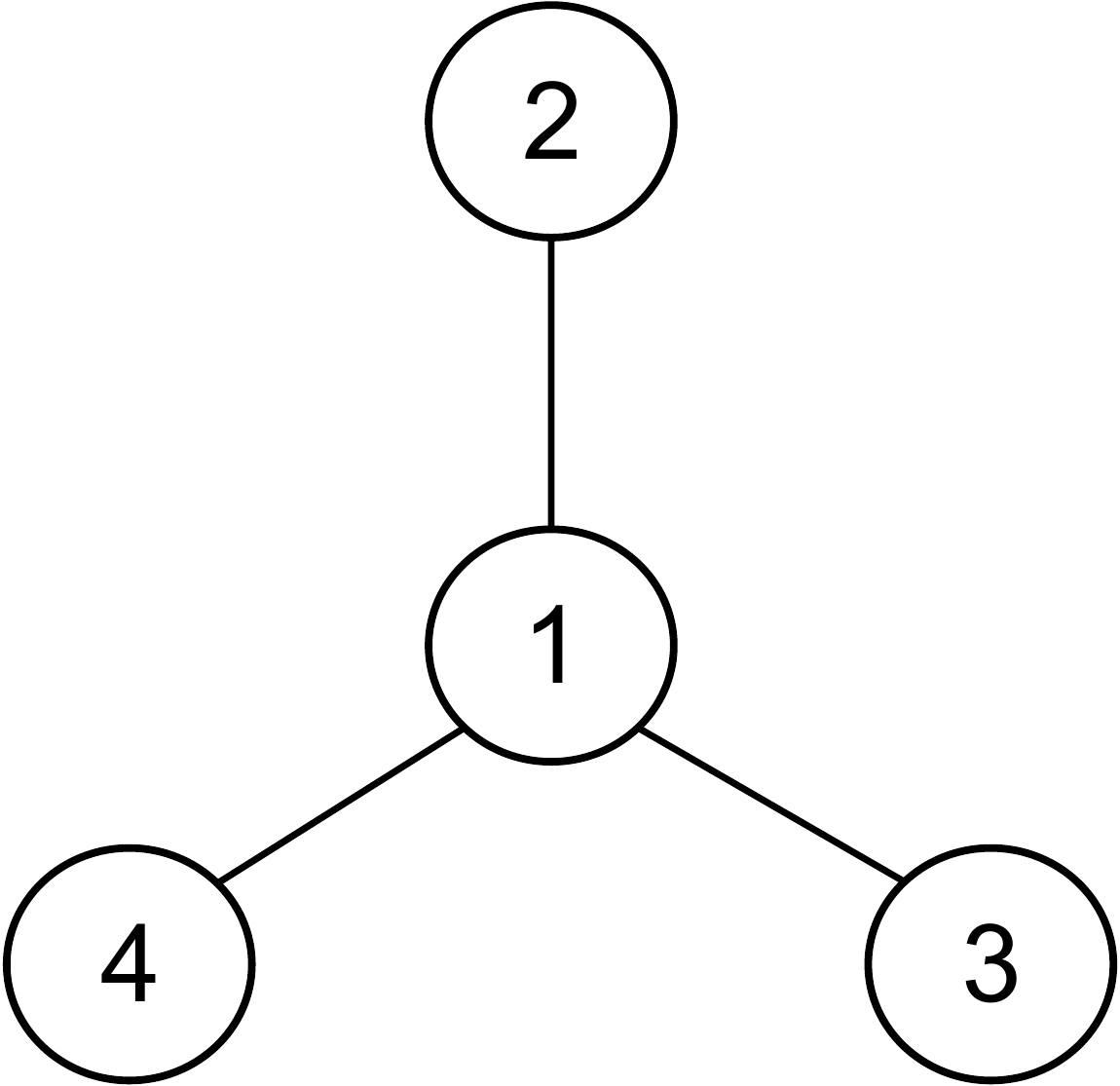} 
	\caption{\label{ftoy}\textbf{Graph of a network with $n=4$ nodes.}}
\end{center}
\end{figure}Assume that every individual generates a single child at each birth event, that is, $B_k=0$ for $k\geq 2$. To simplify the notation, we drop the index 1 in the birth rate matrix $B_1$. The structure of the network induces the following structure for the matrices and vector characterizing the branching process:
$$T=\left[\begin{array}{cccc} T_{11} & T_{12} & T_{13} & T_{14}\\ T_{21 }& T_{22}& 0&0\\T_{31} & 0 & T_{33} & 0\\ T_{41} & 0 & 0&T_{44}\end{array}\right], \quad \vc d=\left[\begin{array}{c} d_1\\d_2\\d_3\\d_4\end{array}\right],$$
$$\arraycolsep=2pt B=\left[\begin{array}{cccc|cccc|cccc|cccc} B_{1;11} & B_{1;12} & B_{1;13} & B_{1;14}& B_{1;21} & B_{1;22} & B_{1;23} & B_{1;24}&B_{1;31} & B_{1;32} & B_{1;33} & B_{1;34}& B_{1;41} & B_{1;42} & B_{1;43} & B_{1;44}\\B_{2;11} & B_{2;12} & 0 & 0& B_{2;21} & B_{2;22} & 0 & 0&0 & 0 & 0 & 0& 0& 0& 0 & 0\\B_{3;11} & 0 & B_{3;13} & 0& 0 & 0 & 0 & 0&B_{3;31} & 0 & B_{3;33} & 0& 0& 0& 0& 0\\B_{4;11} & 0 & 0 & B_{4;14}& 0 & 0 & 0 & 0&0 & 0 & 0 & 0& B_{4;41} & 0 & 0& B_{4;44}\end{array}\right],$$where the diagonal of $T$ is such that $T\vc 1+B\vc 1+\vc d=\vc 0$. 
To further simplify the model, we set $d_i=d$ for all $i$,
$B_{1;i,j}=b_1$ for all $i,j$, $B_{2;i,j}=b_2$ for $i,j=1,2$,  $B_{3;i,j}=b_3$ for $i,j=1,3$, and  $B_{4;i,j}=b_4$ for $i,j=1,4$, for some parameters $d,b_1,b_2,b_3,b_4$. In this special case, the expression \eqref{omdef} for $\Omega$ reduces to $\Omega=T+B(I\otimes \vc 1+\vc 1\otimes I)$ where 
$$T=\left[\begin{array}{cccc} -T_{12} - T_{13} - T_{14}-d-16b_1 & T_{12} & T_{13} & T_{14}\\ T_{21 }& -T_{21 }-d-4b_2 & 0&0\\T_{31} & 0 & -T_{31}-d-4b_3& 0\\ T_{41} & 0 & 0&-T_{41}-d-4b_4\end{array}\right],$$ and
$$B(I\otimes \vc 1+\vc 1\otimes I)=\left[\begin{array}{cccc}8b_1& 8b_1& 8b_1& 8b_1\\
 4b_2& 4b_2&    0&    0\\4b_3&    0& 4b_3&   0\\ 4b_4&    0&   0& 4b_4\end{array}\right].$$ 
 
 We assume that the branching process starts with one individual at node 1 at time $t=0$, and we are interested in the population size at time $t=2$ and in the extinction probability. With the specific parameter values provided in the second column of Table~\ref{tabtoy}, we obtain
 $$M_{11}(2)=308.7914,\quad D_1(2)=4468.1,\quad R_0=1.3426,\quad q_1=0.7390,$$ so the process is supercritical. The elasticities of the above four quantities with respect to each model parameter are shown in Table~\ref{tabtoy}. They can be interpreted using \eqref{elasint}: for instance, if we increase the value of $T_{14}$ by $10\%$, then the value of $M_{11}(2)$ increases by approximately $6.14\%$ and the value of $q_1$ decreases by approximately $0.28\,\%$. We see that the parameters which influence the most the population growth are, in decreasing order: $d$, $b_4$, $b_1$, $b_3$, $b_2$, $T_{14}$, $T_{12}$, $T_{13}$, $T_{21}$, $T_{41}$ and $T_{31}.$ We also see that some transitions between the nodes positively influence the population growth (e.g. from 1 to 4), while other transitions have a negative influence (e.g. from 1 to 2). 
 
\begin{table}[t]
\begin{center} \caption{\label{tabtoy}Sensitivity analysis of the toy example for a given set of parameters.}\medskip
\tabcolsep=0.2cm\begin{tabular}{rrrrrr}
$p$ & Value of $p$ & $\Upsilon_p^{ M_{11}(2)}$& $\Upsilon_p^{ D_{1}(2)}$& $\Upsilon_p^{ R_0}$ & $\Upsilon_p^{ q_1}$
  \\ 
  \hline
 $ T_{12}$ &2 &-0.4683&   -0.3764&   -0.0175 &   0.0198\\
  $T_{13}$ &6 & -0.4679  & -0.3126 &  -0.0133  &  0.0151\\
  $T_{14}$ &8 &0.6255  &  0.6723  &  0.0307  & -0.0298\\
  $T_{21}$ &1 & 0.1564 &   0.1247 &   0.0065  & -0.0062\\
 $ T_{31}$ &1 &0.0733  &  0.0484 &   0.0023&   -0.0022\\
  $T_{41}$ &1 & -0.0826 &  -0.0891 &  -0.0044  &  0.0034\\
  $d $&10 &-20.0000 & -17.9012 &  -0.9804  &  1.0131\\
  $b_1$ &1 &7.4966  &  6.9555  &  0.2608 &  -0.3174\\
  $b_2$ &2 & 3.1233  &  2.7226 &   0.1227 &  -0.1286\\
  $b_3$ &3 & 6.5588  &  5.7174  &  0.2448 &  -0.2451\\
  $b_4$ &4 &  9.9944   & 8.7123 &   0.3478  & -0.3220
  \end{tabular} 
\end{center}
\end{table}

\section{Application: Sensitivity analysis of an influenza-like epidemic spreading on a network of cities}\label{app}

In this section we apply the sensitivity results developed in the previous section to a stochastic model for the initial spread of an influenza-like epidemic among a set of cities connected by air routes in the United States. This allows us to determine the relative importance of the model parameters to the disease spread. For this purpose, the early stages of the epidemic are approximated by a branching process evolving on a network as described in Section~\ref{bpn}.

Appropriate branching processes are frequently used to approximate the process of infection during the early stages of a general stochastic epidemic in a large closed and homogeneously mixing population. 
In theory, during the course of a major epidemic, the epidemic grows like a branching process until about $\sqrt{N}$ members of the population of $N$ individuals become infected, see \cite{bd95}. The accuracy of the approximation actually depends on our own error criteria and is discussed in Appendix~\ref{valapprox}. As shown in Section~\ref{sec:sens}, the tractability of branching processes make them perfectly suitable for a sensitivity analysis.

\subsection{Description of the model}\label{sedes}
We assume a constant, large, and homogeneously mixing population in each city. We use a branching process on a network, characterized by a set of matrices $\{T,\{B_k\}_{k\geq1},\vc d\}$ that we detail below, to model the spread of the disease among the cities during the early stages of the epidemic. As discussed in Appendix~\ref{valapprox}, our model of branching process is suitable as long as the number of susceptible individuals is much larger than the number of infected individuals in each city, which justifies why we focus on the early stages of the disease. In the rest of this section, the time unit is the day. 

Individuals in the branching process correspond to people infected by the disease, and each of the $n$ nodes of the network corresponds to a city in the U.S. An individual makes a transition from node $i$ to node $j$ in the network when he travels by plane from city $i$ to city $j$.
Cities are connected through a symmetric $n\times n$ air travel matrix $A$ where the entry $A_{ij}$ is the average number of passengers per day from city $i$ to city $j$ for $i\neq j$ (by convention we set $A_{ii}=0$). The symmetry assumption for $A$ implies that the population of each city remains constant, see for instance \cite{chitnis2008determining} and \cite{gekg04}.
The travel rate per individual per day from city $i$ to city $j$ is obtained by dividing the daily average number of passengers going from city $i$ to city $j$ by the population of city $i$, that we denote by $N_i$; the nondiagonal entries of the transition rate matrix $T$ are therefore given by
\begin{eqnarray}\label{tr}T_{ij}&=&\frac{A_{ij}}{N_i},\qquad i\neq j.\end{eqnarray} For the purpose of our application, we call the matrix $T$ the \textit{travel rate matrix}. 

In a first approach, we assume that infected people transmit the disease to new individuals within cities only, not during their plane travel. An individual at node $i$ infects a new individual at rate $\beta_i$; this parameter thus corresponds to the average number of infectious contacts per day by an infected individual in city $i$. The rates $\beta_i$ depend on the cities and are gathered in a \textit{transmission rate vector} $\vc \beta$. Since only one transmission may occur at a time, the sequence $\{B_k\}_{k\geq1}$ of birth rate matrices in the branching process is such that the only nonzero entries of the matrix $B_1$ are $$(B_1)_{i;ii}=\beta_i \quad\mbox{ for }i=1,\ldots,n,$$ and $$B_k=0\quad\mbox{ for }k\geq 2.$$  

Finally, infected individuals may be ``removed'' from the population by recovering (or dying) at rate $d_i$ in city $i$, which corresponds to the inverse of the mean time of contagion of an infected individual in city $i$. These rates correspond to the death rates in the branching process and form the \textit{removal rate vector} $\vc d$.
Recall from Section~\ref{bpn} that the diagonal elements of the matrix $T$ are then such that $T\vc 1+B_1\vc 1+\vc d=\vc 0$, that is, $T_{ii}=-\beta_i-d_i-\sum_{j\neq i} T_{ij}$.

Due to the simple structure of the matrix $B_1$, we have $B_1(\vc 1\otimes I)=B_1(I\otimes \vc1)=\diag(\vc\beta)$, so that the expressions for the matrices $\Omega$ and $R$, given in \eqref{omdef} and \eqref{exR} respectively, simplify to
$$\Omega=T+2\,\diag(\vc\beta),\quad R=-\left[I+T^{-1}\diag(\vc\beta)\right]^{-1}\,T^{-1}\diag(\vc\beta).$$ In the epidemic context, $R$ is the matrix of mean number of secondary infections generated in each city by an average infected individual in an entirely susceptible population. The dominant eigenvalue $R_0$ of $R$ is called the \textit{basic reproduction number}, which is a key quantity for determining whether an infection can invade and persist with a positive probability: the infection can invade the population if and only if $R_0>1$ (supercritical case). When $R_0<1$ (subcritical case) the disease simply dies out, and when $R_0=1$ (critical case) the disease becomes endemic, meaning that the proportion of infected individuals remains constant over time.

In a second approach, we take into account possible transmission on board airplanes. The description of the model in this case is given in Appendix~\ref{obtd}.

\subsection{Data}\label{data}

Air travel data on the daily average number of passengers for each city-pair $(i,j)$ among a group of $n=114$ cities are obtained from the Domestic Airline Fares Consumer Report of the US Department of Transportation for the first Quarter of 2011 \cite{DOT}. This report provides information on the 1,000 largest city-pair markets in the 48 
contiguous states, among which the number of 
one-way passenger trips per day.  These markets involve 114 cities and
account for about 75 percent of all 48-state passengers and 70 percent of total domestic 
passengers.
Data on the metropolitan population of the American cities are taken from the United States Census Bureau (2011 estimates). The travel rates are computed according to \eqref{tr}. 

%

Disease parameters within cities are taken from \cite{gekg04}. In that paper, the contact rate between susceptibles and infectives, $\beta^*$, is generally estimated to be 1.0. Like many respiratory diseases, influenza exhibits a seasonal pattern with a low summer and a high winter incidence. Here, we chose to only focus on the autumn-winter period going from October to March. Cities are divided into five general zones based on 
the average number of heating degree-days and cooling degree-days, see for instance \cite{climate}. We applied a scaling factor to $\beta^*$ which depends on the climate zone of each city and ranges between 0.85 and 1.1 in the autumn-winter period. This provides a specific value $\beta_i$ for each city $i$. The complete list of cities with their corresponding metropolitan population and transmission rate is provided in Table~\ref{fig:data} in Appendix~\ref{apcd}.

Finally, the average length of the infectious period is estimated to be 2.95 days independently of the city \cite{gekg04}, which leads to $ d_i=d=1/2.95$ for all $i$.

\subsection{Results and Discussion}\label{illu}

In this section, we use the results of Section~\ref{sec:sens} to compute the elasticity of the mean cumulative epidemic size $\vc D(t)$ and of the basic reproduction number $R_0$, with respect to the parameters of the model, over the first two weeks of the epidemic, that is, for $t\in[0,14]$ (as discussed in Appendix~\ref{valapprox}, the branching process approximation is reasonable during this time period).

We present our results for a small number of cities, namely New York, Orlando, Chicago, and San Francisco. As shown in Table~\ref{tab2b}, this set of cities covers all possible values of transmission rates. Table~\ref{tab2b} also presents the travel rates from New York to the three other cities, and the mean cumulative size of the epidemic after 14 days if it started in each of the four cities. We obtain $\Omega_0=2.1401$ and $R_0=3.2433$, meaning that the epidemic breaks out with positive probability. Note that these values are particularly high due to the relatively high transmission rates during the autumn-winter period (between 0.85 and 1.1).

\begin{table}[t]
\begin{center} \caption{\label{tab2b} Travel rates from city $i$ (which we assume here to be New York) to city $j$ ($T_{ij}$, travel frequency per individual per day from $i$ to $j$), transmission rates in city $j$ ($\beta_j$, average number of new infection per individual per day in city $j$), and mean cumulative epidemic size after 14 days given that the disease started in city $j$ ($D_j(14)$),
for four selected cities $j$.}
\medskip
\begin{tabular}{lcccc}
  City $j$ & $T_{ij}$ & $\beta_j$ & $D_j(14)$
  \\ 
  \hline
  New York & -- &  $1.1$ & $ 5.93\cdot10^4$\\
  Chicago & $3.46\cdot10^{-4} $ & $1.1$ & 	$5.90\cdot10^4$\\
  San Francisco & $3.20\cdot10^{-4} $  & $1.02$  &$2.13\cdot10^4$\\
  Orlando & $3.64\cdot10^{-4} $ & $0.85$ &$6.31\cdot10^3$
\end{tabular} 
\end{center}
\end{table}

\paragraph*{Elasticity of the epidemic size with respect to the transmission rates.} 
Figure~\ref{fielbet3} shows the elasticity of the mean cumulative size of the epidemic with respect to the transmission rate in the city where the disease started, for three cities with different transmission rates. We see that the elasticities are non-negligible and the curves appear in the same order as the transmission rates. Using formula \eqref{elasint}, we see that an increase by $1\%$ of the transmission rate in New York will induce an increase by approximately $14\%$ of the mean cumulative epidemic size after two weeks, while an increase by $1\%$ of the transmission rate in Orlando will induce a corresponding approximate increase of $5\%$ only.
We observe that the elasticities increase over time except, interestingly, for Orlando where the elasticity reaches a maximum and then starts decreasing. This can be interpreted as an effect of the relatively low transmission rate associated with Orlando and the high travel rate from that city: if the epidemic starts in Orlando, after about ten days the growth of the disease in other cities connected to Orlando starts exceeding the infected population in Orlando (due to higher transmission rates in these cities, see also Figure~\ref{fi2} in Appendix~\ref{valapprox}). There is thus a time from which the growth of the disease depends less on the transmission rate in Orlando than on the transmission rate in other cities, and therefore a small increase in the transmission rate of Orlando would start having a decreasing impact on the size of the epidemic.

\begin{figure}[t]\begin{center}
 
 \begin{minipage}{0.4\textwidth}
    \begin{figure}[H]
        \hspace*{-1.6cm}  \includegraphics[scale=0.21]{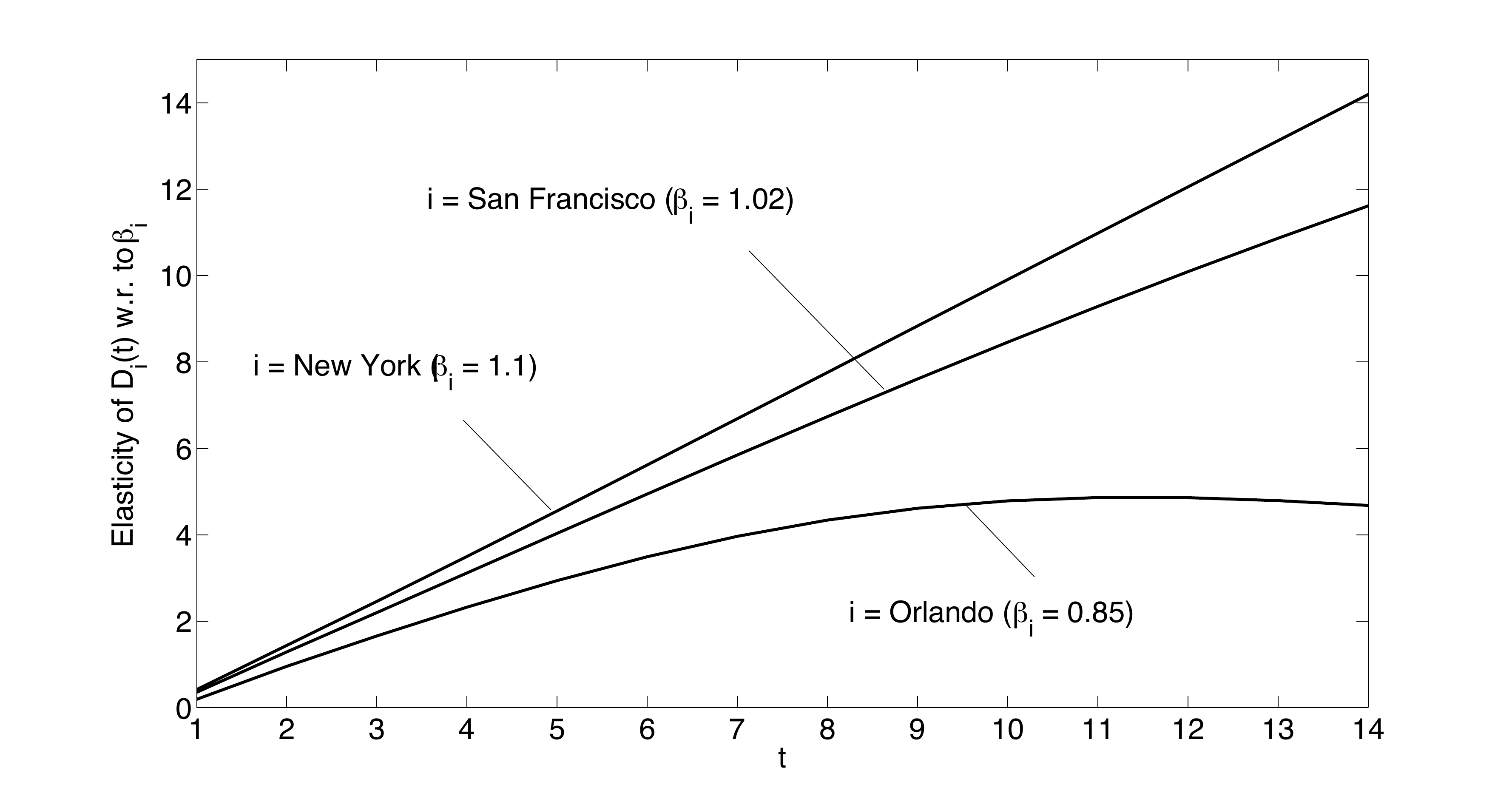}
        \vspace*{-0.5cm}\caption{\label{fielbet3} \textbf{Elasticity of the mean cumulative epidemic size with respect to the transmission rates.} Elasticities of the mean cumulative epidemic size at time $t$ if the disease starts in city $i$ ($D_{i}(t)$) with respect to the transmission rate in city $i$ ($\beta_i$) (left) for a sample of origin cities $i$.
	        \vspace{0.2cm}}
  
    \end{figure}
\end{minipage}
\hspace{3ex} 
\begin{minipage}{0.4\textwidth}
    \begin{figure}[H]
      \hspace*{-1.3cm}  \includegraphics[scale=0.21]{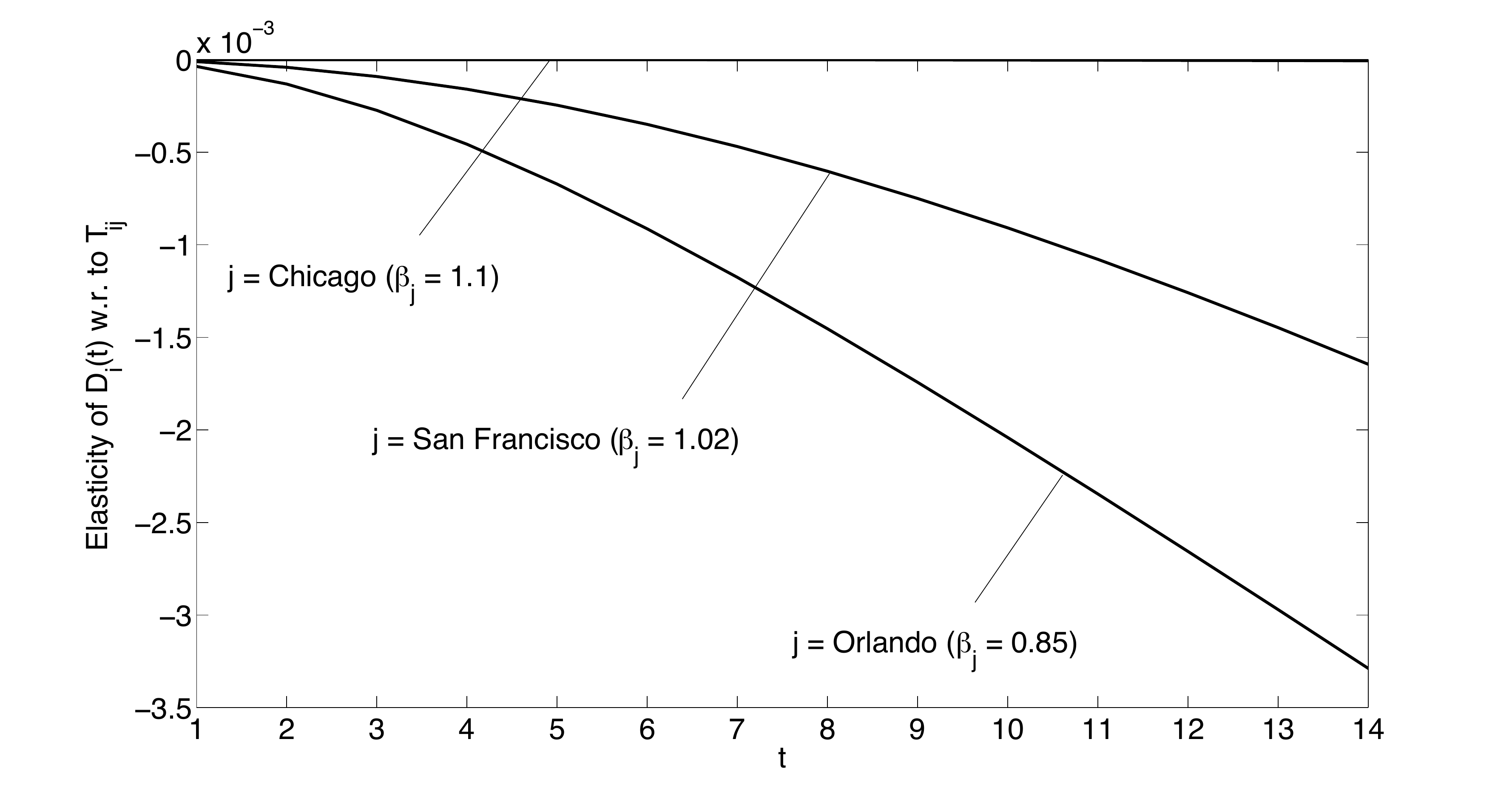}
        \vspace*{-0.5cm}\caption{\label{fielD01bis}\textbf{Elasticity of the mean cumulative epidemic size with respect to the travel rates.} Elasticities of the mean cumulative epidemic size at time $t$ if the disease starts in city $i$ ($D_{i}(t)$) with respect to the travel rate from city $i$ to city $j$ ($T_{ij}$), with New York as the origin city $i$ of the disease, and for a sample of cities $j$. }
    \end{figure}
\end{minipage}\vspace*{-0.5cm}
\end{center}
\end{figure}

\paragraph*{Elasticity of the epidemic size with respect to the travel rates.}
Figure~\ref{fielD01bis} shows the elasticity of the mean cumulative number of infected individuals with respect to three travel rates from New York (where the epidemic is initiated): the travel rates to Chicago, to San Francisco, and to Orlando (see Table~\ref{tab2b} for their values). We see that the elasticities are significantly smaller in absolute value than when they were computed with respect to the transmission rates. A small perturbation in the travel rates has therefore less impact on the dynamics of the disease than an equivalent perturbation in the transmission rates. 
We also observe that the elasticities are all negative, which suggests that increasing the travel rates out of New York, the origin city of the disease, may be beneficial from a sanitary point of view. This result might be surprising but it makes sense because the transmission rate in New York is the highest ($\beta_i=1.1$), and we assumed that there is no transmission during travel (we refer to Appendix~\ref{obtr} to see what happens when on-board transmission is taken into account). 
 Indeed, increasing the travel rates from New York to a city with the same transmission rate, such as Chicago, will have almost no effect on the dynamics of the disease, while increasing the travel rates from New York to a city with a very low transmission rate, such as Orlando, will have a higher impact on the size of the epidemic. This argument further suggests that a smaller transmission rate in the destination city induces a larger elasticity (in absolute value), as confirmed in the figure. 

\paragraph*{Elasticity of $R_0$.}
Recall that the basic reproduction number $R_0$ is an important threshold scalar quantity which measures the initial disease transmission on the whole network. In order to study the sensitivity of $R_0$ with respect to the parameters of the model, we use the results of Section~\ref{sensR0}, where the expression for $\partial_p R$ (provided in \eqref{dpR} in Appendix~\ref{apfurther}) simplifies to
\begin{eqnarray}\label{ro1ap}\partial_p R\hspace{-0.2cm}&=&\hspace{-0.2cm}\left[I+T^{-1}\diag(\vc\beta)\right]^{-1}\left[T^{-1}(\partial_p T) T^{-1}\diag(\vc\beta)-T^{-1}\partial_p\diag(\vc\beta)\right]\left\{\left[I+T^{-1}\diag(\vc\beta)\right]^{-1}+I\right\}.\qquad\end{eqnarray}

 Not surprisingly, the local sensitivity of $R_0$ with respect to the transmission rate in one single city, $\beta_i$, or with respect to the travel rate between two specific cities, $T_{ij}$, is almost negligible (of the order of $10^{-6}$ or less).
\begin{table}[t]
\begin{center}
\caption{\label{tabR0}Elasticity of the basic reproduction number $R_0$ with respect to some model parameters which are constant for all cities. }
\medskip
\begin{tabular}{rccc}
$p:$ &  $\beta^*$&$d$& $c$ 
   \\\hline
  $\Upsilon^{R_0}_p:$ &   $1.2357$ & $-1.2350$ & $-6.62\cdot 10^{-4}$ \\
  \end{tabular} 

\end{center}
\end{table} 
We also performed a global sensitivity analysis of $R_0$ with respect to some parameters which are assumed to be the same for all cities, namely the contact rate $\beta^*=1$ and the removal rate $d=1/2.95$. The results are shown in Table~\ref{tabR0}.
We see that a $1\%$ increase in the contact rate $\beta^*$ (or, equivalently, in the transmission rate of \textit{all} cities together) results in a $1.24\%$ increase in $R_0$.
We also note that, in absolute value, the elasticities of $R_0$ with respect to $\beta^*$ and with respect to $d$ are very close to each other, $R_0$ being slightly more sensitive with respect to $\beta^*$ than with respect to $d$.

In order to analyze the relative change in $R_0$ when \textit{all} travel rates increase by the same factor, we write $T_{ij}=c\cdot T_{ij}$ ($i\neq j$) where the constant $c$ is equal to 1, and we compute the sensitivity of $R_0$ with respect to $c$. The result is given in the last column of Table~\ref{tabR0} and shows that if all travel rates increase by $1\%$, then $R_0$ decreases by $6.62\cdot 10^{-4}\%$. This suggests that increasing the frequency of all domestic travel (by the same percentage) would globally  slow down the epidemic very slightly. This is again a surprising result which comes from the fact that each city has its own transmission rate, and increasing the frequency of all domestic travel (without on-board transmission) would, among other things, increase the travel rates out of the cities with a high transmission rate.
Finally, we see again that the sensitivity is significantly lower with respect to the travel rates than with respect to the transmission rates.

\begin{rem}Note that the sensitivity results are very dependent on the disease under consideration: since the climate plays a fundamental role in the spread of influenza, the growth of the disease is sensitive to the travel rates because different cities in different climate zones exhibit different transmission rates. However, other types of infectious diseases may spread independently of the climate, in which case the air travel network would have much less impact on the epidemic growth, if we assume that there is no in-flight transmission. \end{rem}

\medskip
A discussion of the results in the case where on-board transmission is taken into account is given in Appendix~\ref{obtr}.

\subsection{Vaccination}

Vaccination can be used to control the spread of the disease within a population and hence eradicate it.  From a public authority's perspective, one may want to solve two questions:\begin{itemize} \item[(i)] what is the smallest proportion of individuals that need to be vaccinated in order to prevent an epidemic outbreak, and \item[(ii)]  once this number is reached, how many new infections do we prevent with each additional vaccinated individual? \end{itemize}In this section, we will tackle those two issues by considering the simple problem where there is no on-board transmission and vaccination is uniform over the population. Appendices~\ref{apvaccd} and~\ref{apvacobt} respectively deal with the cases where vaccination is treated differently in each city and where there is on-board transmission. 

As we do not consider age patterns, we shall assume that each individual is vaccinated with some probability $r$, independently of his age and independently of the city. The parameter $r$ thus also denotes the fraction of the total American population which is vaccinated. Vaccination reduces susceptibility, which, in our branching process approximation model, is equivalent to considering
the new transmission rate vector $\vc\beta_v=(1-r)\,\vc\beta$ (in the sequel, we write the subscript $v$ each time a quantity depends on this new vector).
The minimal fraction of individuals that should be vaccinated is solution of the minimization problem
\begin{equation*}
\begin{aligned}
& \underset{r}{\text{minimize}}
& & r\sum\limits_{i}{N_i}\\
& \text{subject to}
&& \lambda_{max}(\Omega_v)\leq 0,
\end{aligned}
\end{equation*} where $\lambda_{max}(\Omega_v)$ denotes the Perron-Frobenius (dominant) eigenvalue of the matrix $\Omega_v$.
The solution of this problem can be written out 
explicitly and the critical value obtained for our model is $r_c= 0.69$, which means that at least $69\%$ of the population has to be vaccinated to avoid an outbreak (see Appendix~\ref{apvacu} for more details). Assuming that the total population in the US is $3.12\cdot 10^8$, the critical number of individuals to vaccinate is $3.12\cdot 10^8\, r_c\approx2.16\cdot 10^8$; this solves Question (i). Given that at least that fraction of the population is vaccinated, the answer to Question (ii) is then given by the sensitivity of the mean total cumulative number of infected individuals until eradication with respect to $r$.

Let $D_i$ represent the mean total cumulative size of the infection given that it was initiated by a first individual in city $i$, and $\vc D=(D_i)=\lim_{t\rightarrow\infty} \vc D(t)$, where $\vc D(t)$ is defined in \eqref{Dt}. Since the dominant eigenvalue of $\Omega_v$ is negative, $\lim_{t\rightarrow\infty} \exp(\Omega_v t)=0$, and from \eqref{Dt} we obtain
\begin{equation} \vc D=(-\Omega_v)^{-1}\,\vc d,\end{equation} where $\vc D$ depends on $r$ through $\Omega_v$. Therefore, the sensitivity of $\vc D$ with respect to the fraction $r$ of vaccinated people is given by
\begin{eqnarray*}\frac{\partial \vc D}{\partial r} &=&-(-\Omega_v)^{-1}\,\partial_r\Omega_v\,(-\Omega_v)^{-1}\,\vc d\\&=&
-(-\Omega_v)^{-1}\,\diag(\vc\beta)\,(-\Omega_v)^{-1}\,\vc d.\end{eqnarray*}
From this formula, we can approximate the mean number of prevented infections per additional vaccine given that a proportion $r> r_c$ of the population is vaccinated: one more vaccine leads to $\partial r=($total population$)^{-1}=(1/312)\,10^{-6}$, so that $$\partial \vc D=-(1/3.12)\,10^{-8}\,(-\Omega_v)^{-1}\,\diag(\vc\beta)\,(-\Omega_v)^{-1}\, \vc d.$$

Assume that we are in an \textit{almost critical} regime, that is, the initial fraction $r$ of vaccinated individuals is slightly larger than $r_c$; consider, for instance, a value $r$ such that $3.12\cdot 10^8\, (r-r_c)=10$, that is, such that the critical vaccinated population is increased by 10. The benefit of one additional vaccine in this regime is shown in Table~\ref{ta8}, in which we compare the mean total cumulative size of the infection in the almost critical regime given the origin city $i$ of the disease, $D_i$, with the approximate number of prevented infections if we introduce one additional vaccine in the population, $\partial D_i$, for three origin cities of the disease.
\begin{table}[t]
\begin{center}\caption{\label{ta8}Uniform vaccination case in the almost critical regime where $r$ is such that $3.12 \cdot 10^8\,(r-r_c)=10$: the mean total cumulative size of the infection given the origin city of the disease ($D_i$), and the number of infections prevented by introducing one additional vaccine ($\partial D_i$), for three origin cities of the disease}
\medskip
\begin{tabular}{lcc}
Origin city $i$ & $D_i$& $\partial D_i$\\ 
\hline New York& $1.1\cdot 10^4$ & $8.8\cdot 10^2$  \\ 
 San Francisco& $2.6\cdot 10^3$& $2.0\cdot 10^2$   \\ 
Orlando& $3.3\cdot 10^4$ & $2.6\cdot 10^3$
\end{tabular}\end{center}
\end{table}
We see that the introduction of one additional vaccine in this almost critical regime would reduce the size of the epidemic by a bit less than $10\%$. The number of prevented infections per additional vaccine is thus very large when the initial number of vaccinated people is close to the critical number; in other words, in the almost critical regime, the growth of the disease is highly sensitive with respect to the vaccination ratio. The values of $\partial D_i$ decrease very rapidly when the initial fraction of vaccinated people $r$ becomes larger than the critical value $r_c$. This is shown in Figure~\ref{fivac1} for the three cities considered in Table~\ref{ta8}, where we depict the number of people who would escape from the disease if we introduce one additional vaccine in the population as a function of the difference $3.12 \cdot 10^8\,(r-r_c)$ between the initial number of vaccinated people and the critical number. 
\begin{figure}[t!] 
	\begin{center}
	\includegraphics[angle=0,width=8cm]{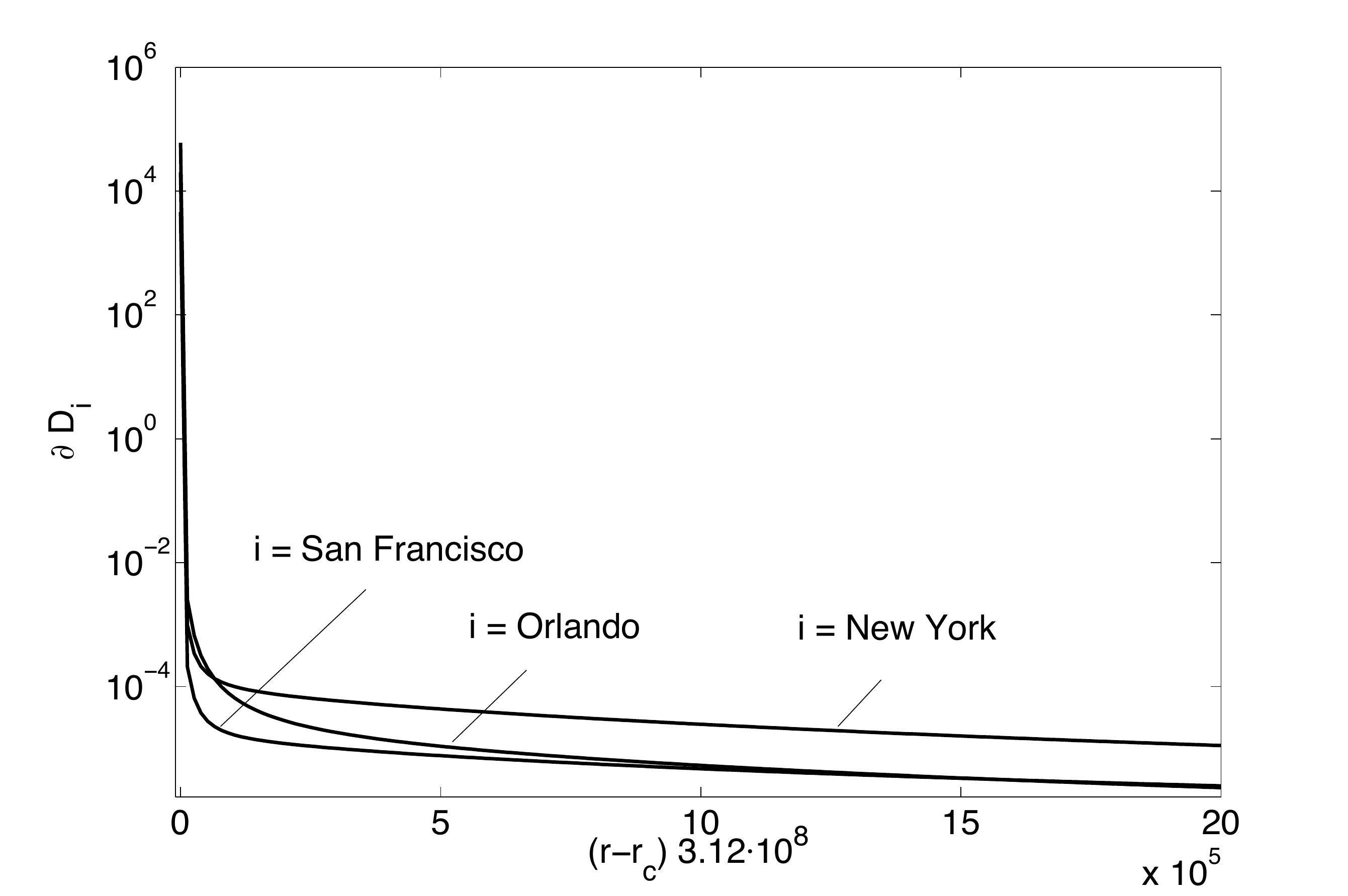} 
	\caption{\label{fivac1}\textbf{Uniform vaccination. }Number of prevented infections per additional vaccine as a function of the difference between the initial number of vaccinated people ($3.12\cdot 10^ 8\, r$) and the critical vaccinated population ($3.12\cdot 10^ 8\, r_c$), for three origin cities of the disease.}
	
\end{center}
\end{figure}

\section{Conclusion} Explicit formulas are derived for the sensitivity analysis of a model of Markovian branching process evolving on a fixed network. This tractable stochastic process is then used to model the early stages of a seasonal influenza-like disease speading on a network of cities in the United States. This approach allows us to study the sensitivity of the size of the epidemic and of the basic reproduction number with respect to the transmission and domestic travel rates in an analytic way, that is, without the use of simulations. Our analysis highlights the differences in the sensitivities with respect to the different parameters, confirming that a precise estimation of the transmission rates is far more important than a precise estimation of domestic mobility data.
We also treat an extension of the epidemic model by considering vaccination campaigns. In this case, a sensitivity analysis enables us to calculate the marginal gain of one additional vaccine, showing how, at the almost critical regime, each additional vaccination prevents a large number of possible infections. 

Note that in the epidemic application, we chose to focus mainly on the sensitivity of expected quantities, such as the mean epidemic size, which can also be described by a linearised SIR model, for simplicity and ease of comparison with those popular models. However our methodology also applies to more sophisticated quantities not deducible from a deterministic model, such as the variance of the epidemic size, or the probability that the epidemic dies out before a given time. We have considered the example of influenza,  which is suitable for our purpose because people usually continue to travel when they are infectious, however our model and methods are not restricted to this particular disease.

Our methodology can be applied to a variety of other Markovian models involving reproduction, death and (discrete) node transition events. Note that the nodes of the network may not only represent geographic locations but also more abstract features such as physiological states of individuals, see for example \cite{hautphenne2012markovian}. A possible extension of the model would be to generalize the results of Sections~\ref{bpn} and~\ref{sec:sens} to a dynamic network (instead of a fixed one); this is the topic of ongoing research.

 \section*{Funding} This work was supported by the Belgian Programme of Interuniversity Attraction Poles, initiated by the
Belgian Federal Science Policy Office; an
Action de Recherche Concert\'ee (ARC) of the French Community of Belgium; the Fonds National de la Recherche Scientifique (F.R.S. - F.N.R.S.); and the Australian Research Council, [grant Nb. DP110101663 to S.H.]. 

\section*{Acknowledgments} 

We thank the referees for their constructive comments which have improved the presentation of the paper, as well as Prof. Peter Taylor and Dr. Joshua Ross who have carefully read and commented the paper.

\appendix
\numberwithin{equation}{section}
\numberwithin{figure}{section}

\section{Further results on the branching process on a network and its sensitivity}\label{apfurther} In this section we gather some technical results related to Sections~\ref{bpn} and~\ref{sec:sens}.

\subsection{The matrix $R$}

Recall from Section~\ref{extcrit} that the entry $R_{ij}$ of the matrix $R$ is the expected total number of children born at node $j$ from a parent born at node $i$, during the entire lifetime of the parent.
For each $k\geq 1$, define the $n\times n^{k+1}$ matrix $$\Psi_k=-T^{-1}B_k,$$ whose entry $(\Psi_k)_{i;j_1\,j_2\ldots\,j_k\,j}$ records the probability that an individual at node $i$ eventually gives birth before dying (potentially after moving to other nodes), and the first reproduction event involves the birth of $k$ children and the parent moving to node $j$, while the children start
their life at nodes $j_1$, $j_2,\ldots,j_k$ respectively. Note that nodes $j$, $j_1$, $j_2,\ldots,j_k$ are not necessarily adjacent to node $i$ here.
An explicit expression for $R$ can be given in terms of the matrices $\Psi_k$, as shown in the next proposition.

\begin{prop}\label{matR} The matrix $R$ of expected total progeny size is given by
\begin{equation}\label{exR}R=\Big[I-\sum_{k\geq 1} \Psi_k(\vc 1^{(k)}\otimes I)\Big]^{-1}\,\sum_{k\geq1} \Psi_k \sum_{i=0}^{k-1} (\vc 1^{(i)}\otimes I\otimes\vc 1^{(k-i)}).\end{equation}
\end{prop}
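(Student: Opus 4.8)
The plan is to derive \eqref{exR} from a first-step (renewal) decomposition of the parent's reproduction history, conditioning on its first reproduction event. I would track only the focal parent as a single time-homogeneous Markov particle whose birth events emit children; since $R$ counts only the \emph{direct} children of that parent (not its descendants), the branching property is not needed here, only the strong Markov property of the parent's own trajectory at reproduction epochs. First I would invoke the interpretation of $\Psi_k=-T^{-1}B_k$ already recorded in the statement: $(\Psi_k)_{i;j_1\ldots j_k\,m}$ is the probability that a parent starting at node $i$ survives to a first reproduction event, that this event is a $k$-birth relocating the parent to node $m$, and that the children start at $j_1,\ldots,j_k$; with complementary probability $1-(\sum_k\Psi_k\vc 1)_i$ the parent dies childless.

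Conditioning on this first event, the children born at node $j$ split into those produced at the first event and those produced during the remainder of the parent's life, which by time-homogeneity contribute in expectation exactly $R_{mj}$ from the parent's new location $m$. This yields the scalar recursion
\[R_{ij}=\sum_{k\ge1}\sum_{j_1,\ldots,j_k,m}(\Psi_k)_{i;j_1\ldots j_k\,m}\Big[\#\{t:j_t=j\}+R_{mj}\Big].\]
Next I would rewrite the two terms in Kronecker form. For the immediate contribution, $\#\{t:j_t=j\}=\sum_{t=1}^k\delta_{j_t,j}$, and summing $\Psi_k$ against the requirement that slot $t$ equals $j$ while freeing all other child slots and the parent slot amounts to right-multiplication by $\vc 1^{(t-1)}\otimes\vc e_j\otimes\vc 1^{(k-t+1)}$; collecting all $j$ replaces $\vc e_j$ by $I$, and re-indexing with $i=t-1$ gives $\sum_{k\ge1}\Psi_k\sum_{i=0}^{k-1}(\vc 1^{(i)}\otimes I\otimes\vc 1^{(k-i)})$, precisely the second factor of \eqref{exR}. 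The upper index is $k-1$ rather than the $k$ appearing in \eqref{omdef} because the parent's relocation slot $m$ is never counted as a child. For the continuation term I would marginalize over all $k$ child slots while keeping the parent slot, i.e. right-multiply $\Psi_k$ by $\vc 1^{(k)}\otimes I$, so the term becomes $S\,R$ with $S:=\sum_{k\ge1}\Psi_k(\vc 1^{(k)}\otimes I)$.

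These steps turn the recursion into the matrix fixed-point equation $R=C+S\,R$, where $C$ is the immediate-children matrix above, whence $R=(I-S)^{-1}C$, which is exactly \eqref{exR}. To finish I would argue that $I-S$ is invertible: $S$ is substochastic (its $(i,m)$ entry is the probability that the parent ever reproduces and is located at $m$ immediately after its first reproduction, so its row sums equal the probability of ever reproducing, which is $\le 1$ and strictly $<1$ from any state with positive death rate), so under the usual irreducibility/transience assumptions its spectral radius is less than $1$ and $(I-S)^{-1}=\sum_{\ell\ge0}S^\ell$. This Neumann series also supplies the probabilistic reading of \eqref{exR}: $S^\ell$ accounts for the parent reaching its $\ell$th reproduction event at each node and $C$ for the offspring emitted at the next one, so summing over $\ell$ counts all births over the parent's lifetime.

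The main obstacle I anticipate is the Kronecker bookkeeping in the third paragraph: making the slot-by-slot marginalization rigorous and pinning down the range $\sum_{i=0}^{k-1}$ exactly, which requires carefully distinguishing the $k$ child slots from the single parent-relocation slot in the lexicographic ordering of the columns of $\Psi_k$. By comparison the analytic points — interchanging the sum over $k$ with the expectation, and the invertibility of $I-S$ — are routine given the summability of the $B_k$ already assumed when forming $\Omega$ in \eqref{omdef}.
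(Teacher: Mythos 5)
Your proposal follows essentially the same route as the paper's proof: a first-step decomposition conditioning on the parent's first reproduction event, yielding the scalar recursion $R_{ij}=\sum_{k}\sum_{j_1,\ldots,j_k,\ell}(\Psi_k)_{i;j_1\ldots j_k\,\ell}\bigl(\sum_t\delta_{j_t,j}+R_{\ell j}\bigr)$, its Kronecker-form rewriting $R=C+SR$, and inversion of $I-S$ justified by substochasticity of $S=\sum_k\Psi_k(\vc 1^{(k)}\otimes I)$. Your added detail on the slot-by-slot Kronecker bookkeeping and the Neumann-series reading of $(I-S)^{-1}$ is sound elaboration of steps the paper leaves implicit, not a different argument.
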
 

\begin{proof}
We develop the entry $R_{ij}$ by conditioning on the first reproduction event happening to the parent at node $i$: 
\begin{eqnarray*} R_{ij}&=& \sum_{k\geq 1} \sum_{j_1}\ldots \sum_{j_k}\sum_{\ell} (\Psi_k)_{i;j_1\,j_2\ldots\,j_k\,\ell}(\delta_{j_1,j}+\ldots+\delta_{j_k,j}+R_{\ell j}),\end{eqnarray*}where $\delta_{i,j}=1$ if and only if $i=j$. In matrix form, this becomes
$$R= \sum_{k\geq 1}\Psi_k\left[ \sum_{i=0}^{k-1} (\vc 1^{(i)}\otimes I\otimes\vc 1^{(k-i)})+(\vc 1^{(k)}\otimes I)R\right],$$ and since the matrix $\sum_{k\geq 1} \Psi_k(\vc 1^{(k)}\otimes I)$ is substochastic, we can write
$R$ explicitly as in \eqref{exR}. \end{proof} 
\medskip

Using the explicit expression for $R$ given in \eqref{exR}, the sensitivity of $R$ with respect to a parameter $p$ is given by
\begin{eqnarray}\nonumber
\partial_p R&=& \Big[I-\sum_{k\geq 1} \Psi_k(\vc 1^{(k)}\otimes I)\Big]^{-1} \Big[\sum_{k\geq 1} \partial_p\Psi_k(\vc 1^{(k)}\otimes I)\Big]\Big[I-\sum_{k\geq 1} \Psi_k(\vc 1^{(k)}\otimes I)\Big]^{-1}\,\sum_{k\geq1} \Psi_k \sum_{i=0}^{k-1} (\vc 1^{(i)}\otimes I\otimes\vc 1^{(k-i)})\\\label{dpR}&&+ \Big[I-\sum_{k\geq 1} \Psi_k(\vc 1^{(k)}\otimes I)\Big]^{-1}\,\sum_{k\geq1} \partial_p\Psi_k \sum_{i=0}^{k-1} (\vc 1^{(i)}\otimes I\otimes\vc 1^{(k-i)}),
\end{eqnarray}where $
\partial_p\Psi_k=\partial_p[-T^{-1}B_k]= -T^{-1}[\partial_p T\;\Psi_k+\partial_p B_k]$.

\subsection{The sensitivity of $\vc Q(t,\vc s)$ and $\vc G(t,s)$}

Here we are interested in characterizing $\partial_p \vc Q(t,\vc s)$ and $\partial_p \vc G(t, s)$.
By differentiating \eqref{equdiff} with respect to $p$, we obtain
\begin{eqnarray}\nonumber\partial_t\partial_p \vc Q(t,\vc s)&=&\partial_p\vc d+\partial_p T \vc Q(t,\vc s)+T\partial_p \vc Q(t,\vc s)+\sum_k\partial_p B_k \vc Q(t,\vc s)^{(k+1)}\\\nonumber&&+\sum_k B_k\sum_{i=0}^k(\vc Q(t,\vc s)^{(i)}\otimes I\otimes \vc Q(t,\vc s)^{(k-i)})\partial_p \vc Q(t,\vc s)\\\label{eq28}&=&A(t,\vc s)+B(t,\vc s)\partial_p \vc Q(t,\vc s)\end{eqnarray} where \begin{eqnarray*}A(t,\vc s)&=&\partial_p\vc d+\partial_p T \vc Q(t,\vc s)+\sum_k\partial_p B_k \vc Q(t,\vc s)^{(k+1)}\\B(t,\vc s)&=&T+\sum_k B_k\sum_{i=0}^k(\vc Q(t,\vc s)^{(i)}\otimes I\otimes \vc Q(t,\vc s)^{(k-i)}).\end{eqnarray*} Unfortunately
the matrix linear differential equation \eqref{eq28} for $\partial_p \vc Q(t,\vc s)$ does not have any closed-form solution because the matrices $B(t,\vc s)$ do not necessarily commute for different values of $t$. 

Using the same argument and Eq. \eqref{eqdiff2}, we show that $\partial_p \vc G(t,s)$ satisfies the same matrix linear differential equation as $\partial_p \vc Q(t,\vc s)$, the only difference being that the term $\partial_p\vc d$ is replaced by $\partial_p\vc d s$.

\subsection{An approximation for the sensitivity of $M(t)$}
An alternative method to compute $\partial_pM(t)$ provides a new approximation for $\partial_pM(t)$ which relies on the integral representation of the derivative of the matrix exponential \cite{najfeld}: \begin{equation}\label{repr}\partial_pM(t)=\partial_p \exp(\Omega t)=\int_0^t \exp(\Omega \tau) \,\partial_p\Omega\,\exp(\Omega (t-\tau))\,d\tau.\end{equation} 

\begin{prop} The matrix $\partial_pM(t)$ can be approximated as \begin{equation}\label{met2}{\partial_p M(t) }\approx\frac{1}{c}\,\exp(\Omega_0 t)\,\Delta_{v}\sum_{i=0}^{K(t)}\sum_{k=i+1}^{K(t)+1}\exp(-ct)\,\frac{(ct)^k}{k!}\,\hat{\Theta}^i\,\Delta_{v}^{-1}\,\partial_p\Omega\,\Delta_{v}\,\hat{\Theta}^{k-i-1}\,\Delta_{v}^{-1}\end{equation} where $c$, $\Delta_{v}$, $K(t)$ and $\hat{\Theta}$ are defined in the proof.
\end{prop}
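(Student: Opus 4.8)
The plan is to combine a diagonal similarity transformation that turns $\Omega$ into a (shifted) Markov generator with the uniformization (randomization) technique for matrix exponentials, and to apply the result to the integral representation \eqref{repr}.

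First I would exploit the structure of $\Omega$. Since the off-diagonal entries of $T$ and all entries of the $B_k$ are nonnegative while the diagonal of $T$ is negative, $\Omega$ is essentially nonnegative (Metzler), so by the Perron--Frobenius theorem its dominant eigenvalue $\Omega_0$ is real and admits a strictly positive right eigenvector $\vc v$. Setting $\Delta_v=\diag(\vc v)$, I would define $\Theta:=\Delta_v^{-1}(\Omega-\Omega_0 I)\Delta_v$ and check that it is a conservative generator: its off-diagonal entries are nonnegative, and its row sums vanish because $\Theta\vc 1=\Delta_v^{-1}(\Omega-\Omega_0 I)\vc v=\Delta_v^{-1}(\Omega_0\vc v-\Omega_0\vc v)=\vc 0$. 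This yields the exact factorisation $\exp(\Omega t)=\exp(\Omega_0 t)\,\Delta_v\,\exp(\Theta t)\,\Delta_v^{-1}$.

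Next I would uniformize $\Theta$: choosing a constant $c\geq\max_i|\Theta_{ii}|$, the matrix $\hat{\Theta}:=I+\Theta/c$ is stochastic, and $\exp(\Theta t)=\sum_{k\geq0}e^{-ct}\frac{(ct)^k}{k!}\,\hat{\Theta}^k$ is a Poisson-weighted series of powers of $\hat{\Theta}$, to be truncated at order $K(t)$ chosen so that the neglected Poisson tail lies below a prescribed tolerance. I would then substitute the factorisation into \eqref{repr}. The scalar factors collapse via $\exp(\Omega_0\tau)\exp(\Omega_0(t-\tau))=\exp(\Omega_0 t)$, and the diagonal factors pull outside the integral, leaving
$$\partial_pM(t)=\exp(\Omega_0 t)\,\Delta_v\Big[\int_0^t\exp(\Theta\tau)\,W\,\exp(\Theta(t-\tau))\,d\tau\Big]\Delta_v^{-1},\qquad W:=\Delta_v^{-1}\,\partial_p\Omega\,\Delta_v.$$

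Inserting the two Poisson series and interchanging sum and integral reduces everything to the scalar integrals $\int_0^t\tau^i(t-\tau)^j\,d\tau=t^{i+j+1}\,i!\,j!/(i+j+1)!$ (a Beta integral), which cancels the factorials $i!\,j!$ coming from the two series and leaves the weight $\frac{1}{c}e^{-ct}(ct)^{i+j+1}/(i+j+1)!$ in front of $\hat{\Theta}^i W\hat{\Theta}^{j}$. The reindexing $k=i+j+1$ (so $j=k-i-1$, with $k$ running from $i+1$) converts the double sum over $(i,j)$ into the sum over $(i,k)$ of \eqref{met2}, and substituting $W=\Delta_v^{-1}\partial_p\Omega\,\Delta_v$ back in reproduces the pattern $\Delta_v\hat{\Theta}^i\Delta_v^{-1}\partial_p\Omega\,\Delta_v\hat{\Theta}^{k-i-1}\Delta_v^{-1}$; truncating both series at order $K(t)$ (equivalently $k\leq K(t)+1$) gives the stated approximation. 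The only genuinely approximate step is this truncation, and it inherits the standard Poisson-tail error bound; every other manipulation is exact. The main point to justify carefully is that $\Theta$ is a bona fide generator — i.e.\ that $\vc v>\vc 0$, which requires $\Omega$ to be irreducible so that Perron--Frobenius applies — since this is what guarantees that $\hat{\Theta}$ is stochastic and that the uniformization series converges with its usual probabilistic interpretation.
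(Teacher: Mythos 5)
Your proposal is correct and follows essentially the same route as the paper's proof: the same diagonal similarity $\Theta=\Delta_v^{-1}\Omega\Delta_v-\Omega_0 I$ turning $\Omega$ into a (shifted) Markov generator (the ``dual'' process), the same uniformization $\hat{\Theta}=I+\Theta/c$ with Poisson-weighted series, and the same substitution into the integral representation \eqref{repr} followed by truncation. If anything, your argument is more explicit than the paper's, which stops at ``replacing the exponentials by their truncated series,'' whereas you carry out the Beta-integral $\int_0^t\tau^i(t-\tau)^j\,d\tau=t^{i+j+1}\,i!\,j!/(i+j+1)!$ and the reindexing $k=i+j+1$ that actually produce the double sum in \eqref{met2}.
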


\begin{proof} The integral in \eqref{repr} is approximated by using a duality argument (as in \cite{hln1}) and the uniformization technique for continuous-time Markov chains (see for instance \cite{ross}), as we detail now.

Let $\vc u^\top$ and $\vc v$ be the left and right eigenvectors of $\Omega$ corresponding to the Perron-Frobenius eigenvalue $\Omega_0$, normalized by $\vc v^\top\vc 1=1$ and $\vc u^\top\vc v=1$, and let $\Delta_{v}=\diag(\vc v)$. We first rewrite $\exp(\Omega t)$ as $$\exp(\Omega t)=\exp(\Omega_0 t)\,\Delta_{v}\,\exp({\Theta} t)\,\Delta_{v}^{-1},$$ where $${{\Theta}}=\Delta_{v}^{-1}\Omega\Delta_{v}-\Omega_0 I.$$ It is easy to show that $\Theta$ satisfies all the properties to be the generator of a continuous-time Markov chain: $\Theta_{ij}\geq 0 $ for $i\neq j$ and $0\geq\Theta_{ii}=-\sum_{i\neq j} \Theta_{ij}$. This Markov chain is called the \textit{dual} of the branching process with mean population size matrix $\exp(\Omega t)$.
The integral in \eqref{repr} then becomes  \begin{eqnarray*}
    \lefteqn{\int_0^t \exp(\Omega \tau) \,\partial_p\Omega\,\exp(\Omega (t-\tau))\,d\tau=}\\ &&\exp(\Omega_0\,t)\,\Delta_{v}\int_0^t \exp({\Theta} \tau)\,\Delta_{v}^{-1}\partial_p\Omega\,\Delta_{v}\,\exp({\Theta} (t-\tau))d\tau\,\Delta_{v}^{-1}.\end{eqnarray*} Since $\exp(\Theta t)$ is now a probability transition matrix, we can use the uniformization method to solve the integral: let $c>\max_i|\Theta_{ii}|$ and $\hat{\Theta}=I+\frac{\Theta}{c}$.  We can write 
$$ \exp(\Theta t)=\sum_{k\geq 0} \exp(-ct)\,\frac{(ct)^k}{k!}\,\hat{\Theta}^k,$$ and since the matrix $\hat{\Theta} $ is substochastic, we have \begin{eqnarray*} ||\exp(\Theta t)||&\leq &\sum_{k\geq 0} \exp(-ct)\,\frac{(ct)^k}{k!}\,||\hat{\Theta}^k||\\&\leq &\sum_{k\geq 0} \exp(-ct)\,\frac{(ct)^k}{k!}.\end{eqnarray*} Let $K(t)$ be such that $\sum_{k= 0}^{K(t)} \exp(-ct)\,\frac{(ct)^k}{k!}\geq 1-\epsilon$ for a given $\epsilon>0$. Then,
$$ \exp(\Theta t)\approx \sum_{k=0}^{K(t)} \exp(-ct)\,\frac{(ct)^k}{k!}\,\hat{\Theta}^k,$$ and by replacing $\exp(\Theta \tau)$ and $\exp(\Theta( t-\tau))$ in the last integral by their approximations, we finally obtain the approximation \eqref{met2} for $\partial_p M(t)$. \end{proof}

\section{Interpretation of the elasticity} \label{apel}

In this section we interpret the elasticity in terms of a proportional response to a proportional perturbation.   Let $a$ be the elasticity of a measure $X$ of the model with respect to a parameter $p$, that is, $\Upsilon^X_p=\partial\log X/\partial\log p=a$. Assume that $p$ increases by $r\,\%$ ($r$ relatively small), and let $z$ be the induced proportional increase (or decrease, depending on the sign of $a$) of $X$, that is,$$p'=p(1+0.01r)\Rightarrow X'=X(1+z).$$ By taking the logarithm we obtain
  $$\log p'=\log p+\log(1+0.01r)\Rightarrow \log X'=\log X+ \log(1+z).$$ We thus have $$a=\frac{\partial\log X}{\partial\log p}\approx\frac{\log(1+z)}{\log(1+0.01r)},$$ and we obtain $$z\approx \exp(a\log(1+0.01r))-1.$$ So, for instance, if $p$ increases by $r=1\,\%$, then  then $X$ increases (or decreases) by approximately $100[\exp(a/100)-1]\,\%$. In Figure~\ref{ela}, we show the curve $y=100[\exp(a/100)-1]$ and we compare it with $y=a$. We see that the two curves are almost superimposed when $|a|$ is small.
  
  \begin{figure}[t!] 
	\begin{center}
	\includegraphics[angle=0,width=8cm]{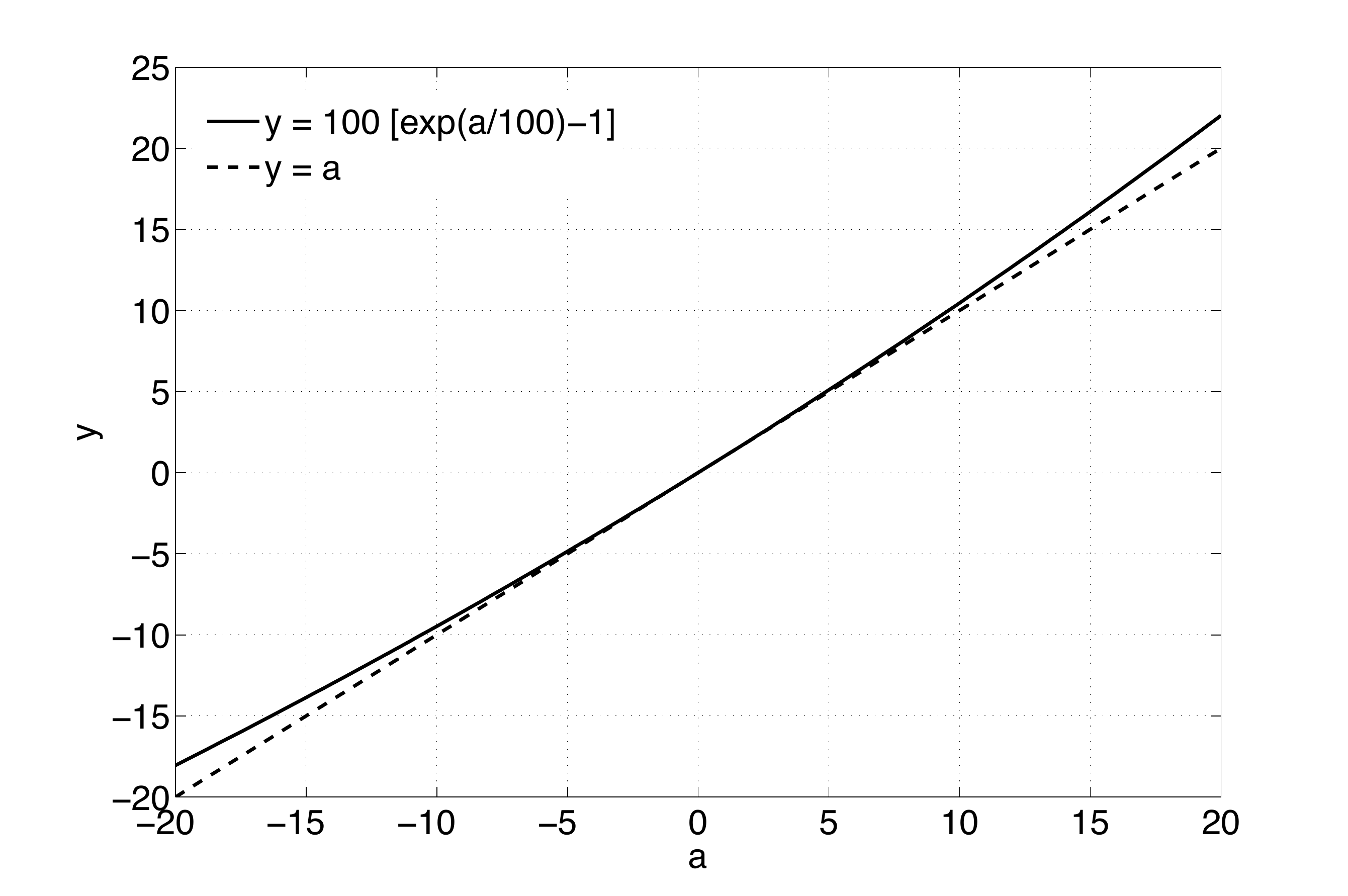} 
	\caption{\label{ela}\textbf{Elasticity. }Interpretation of the elasticity in terms of percent of increase or decrease of a measure of the model}
\end{center}
\end{figure}

\section{Validity of the branching process approximation}\label{valapprox}

In this section, we compare the mean instantaneous and cumulative population size obtained from the branching process defined in Section~\ref{bpn} to their analogue in the popular deterministic \textit{SIR} model. 

For that purpose, we illustrate the two measures $M(t)$ and $\vc D(t)$ on an example where there is a first infectious case in Orlando (city $i$) at time $t=0$. This city has a low transmission rate ($\beta_i=0.85$) but it is well connected with New York (city $j$, $T_{ij}=0.0033$) which has a high transmission rate ($\beta_j=1.1$). In Figure~\ref{fi2}, we show the growth of the population of infected individuals within the origin city of the disease, as well as in New York.  In that figure, we compare the branching process approximation to the 
SIR model $\{(S_i(t),I_i(t),R_i(t)), 1\leq i\leq 114\}$ which tracks, for every city $i$, the number of susceptible ($S_i(t)$), infected ($I_i(t)$), and removed ($R_i(t)$) individuals in city $i$ at time $t$,  and whose evolution is modeled by the nonlinear differential equations
\begin{eqnarray*}\dfrac{d S_i}{dt}&=& -\dfrac{\beta_i}{N_i} S_i \,I_i+\sum_{j\neq i} S_jT_{ji} -\sum_{j\neq i} S_iT_{ij} \\
\dfrac{d I_i}{dt}&=& \dfrac{\beta_i}{N_i} S_i \,I_i+\sum_{j\neq i}I_j T_{ji} - I_i\sum_{j\neq i} T_{ij}-d_i \,I_i\\
\dfrac{d R_i}{dt}&=& d_i\,I_i+\sum_{j\neq i}R_j T_{ji} -R_i\sum_{j\neq i} T_{ij},
\end{eqnarray*} where $(S_i(0), I_i(0),R_i(0))$ is specified, for $1\leq i\leq 114$.

We assume homogeneous mixing within each city. According to \cite{bd95}, during the course of a major epidemic, the disease grows like a branching process within each city $i$ until about $\sqrt{N_i}$ members of the population of $N_i$ individuals become infected. If we use this criterion to determine the maximum validity period of the branching process approximation in each city, we find that the approximation is accurate during the first $10$ days of the epidemic on average, the time value ranging from 6 days (Aspen, Colorado, $\sqrt{N_i}=81.6$, $\beta_i=1.1$) to 15 days (Houston, Texas,  $\sqrt{N_i}=2467.2$, $\beta_i=0.85$). 

This criterion is probably too strong for our purpose. If we rather fix to $1\%$ the relative error obtained when computing $M_{ii}(t)$ with the branching process versus $I_i(t)$ with the \textit{SIR} model, the approximation is accurate until $13$ days on average. The maximum approximation time depends on the metropolitan size of the city considered and on its associated transmission rate; it ranges from 5 days (Aspen) to 20 days (Dallas, Texas,  $\sqrt{N_i}=2554.8$, $\beta_i=0.85$). For Orlando ($\sqrt{N_i}=1473.4$) and New York ($\sqrt{N_i}=4366.9$), the two cities used to illustrate our measures, the approximation is accurate until respectively 19 and 15 days after the introduction of the first infectious case in the city. 

Figure~\ref{fi2} also shows that, while the epidemic initially started in Orlando, after a bit more than two weeks it mainly evolves in New York because of the relatively high travel rate between Orlando and New York and the high transmission rate in New York. This is in line with \cite{grais03} and confirms that air travel has a non-negligible effect on the spatial spread of the disease.

\begin{figure}[t] 
	\begin{center}
	\includegraphics[angle=0,width=10cm]{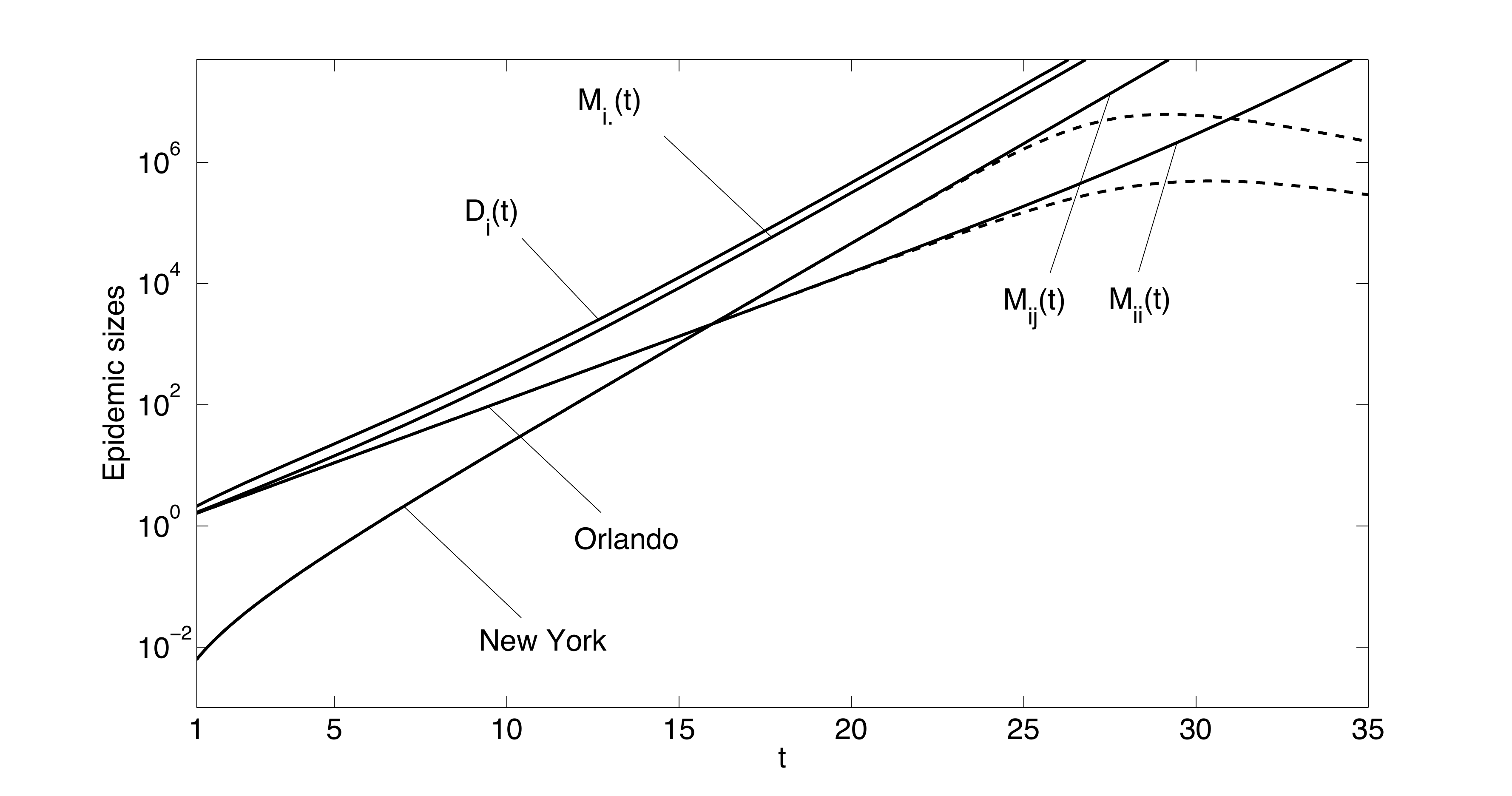} 
	\caption{\textbf{Mean instantaneous and cumulative epidemic sizes.} Mean epidemic sizes between the first and the 35th day (in logarithmic scale) if the disease is initiated by one infectious case in Orlando (city $i$) on day $t=0$. The plain lines correspond to the branching process approximation, the dashed lines correspond to the \textit{SIR} model. $M_{ii}(t)$ is the mean instantaneous number of infected individuals in Orlando, $M_{ij}(t)$ is the mean instantaneous number of infected individuals in New York (city $j$), $M_{i.}(t)=\sum_kM_{ik}(t)$ is the mean instantaneous total number of infected individuals in all cities, and $D_i(t)$
is the mean cumulative number of infected individuals in all cities. The branching process approximates the \emph{SIR} model reasonably during the first three weeks of the epidemic.
}
\label{fi2}
\end{center}
\end{figure}
\begin{rem} The expected instantaneous epidemic sizes $M_{ij}(t)$ obtained from the branching process approximation correspond to the solution of the linearisation of the quadratic differential equation for $I_j(t)$ around the disease-free equilibrium, that is, if we assume that $S_j\approx N_j$ for all $j$. Note that, thanks to its stochastic nature, the branching process approximation may also allow us to study other types of questions, such as the full distribution of the instantaneous and cumulative epidemic sizes during the early stages of the disease. These questions would be particularly relevant if the epidemic starts with a small number of infected individuals, as the strength of stochasticity increases as population sizes get smaller.
\end{rem}

\section{On-board transmission} \label{obt}

In this section, we show what the results of Sections~\ref{sedes} and~\ref{illu}  become when on-board transmission is taken into account. 

\subsection{Description of the model}\label{obtd}

To the best of our knowledge, there is not enough available data to conclude in an acceptable on-board transmission rate for influenza-like diseases. Therefore, for the sake of simplicity, we assume that during a flight an infected individual may infect up to two susceptible individuals (his/her two closest neighbours in the airplane), independently of each other. This is probably a lower bound of the real number of new infections, but it is sufficient for our purpose of showing the differences with the case without on-board transmission. 

We model the number of new infections generated by an infected individual during a flight from city $i$ to city $j$ with a binomial distribution $\textrm{B}(2,p_{ij}),$ where $p_{ij}$ is the probability that an infected individual infects another individual during a flight from city $i$ to city $j$. Since no real data are available for these on-board transmission probabilities, for the sake of our analysis we shall assume that they take the hypothetical form $$p_{ij}=\dfrac{5h_{ij}}{5h_{ij}+1},$$where $h_{ij}$ is the flight time (in hours) from city $i$ to city $j$, which is directly proportional to the distance between $i$ and $j$. The data on non-stop distances between cities are available from the US Department of Transportation \cite{DOT}. 
For $i\neq j$, the probability $(P_1)_{ij}$ that an infected individual infects exactly one susceptible individual during the flight between city $i$ and city $j$ is given by $(P_1)_{ij}=2p_{ij}(1-p_{ij})$. Similarly, the probability $(P_2)_{ij}$ that an infected individual infects exactly two susceptible individuals during the flight between city $i$ and city $j$ is given by $(P_2)_{ij}=p_{ij}^2$. These probabilities depend on the flight time and are shown in Figure~\ref{fiflight}.

The travel rate from city $i$ to city $j$ without subsequent transmission, denoted by $T'_{ij}$, or associated with the transmission of the disease to one or two individuals, denoted by $(C_1)_{ij}$ and $(C_2)_{ij}$ respectively, are given by
\begin{eqnarray*}T'_{ij}&=&T_{ij}\, [1-(P_1+P_2)_{ij}],\\(C_1)_{ij}&=&T_{ij}\,(P_1)_{ij},\\(C_2)_{ij}&=&T_{ij} \,(P_2)_{ij},\end{eqnarray*}for $i\neq j$.

In the branching process approximation, the travel rate matrix then becomes $T'$ (the diagonal of $T'$ being the same as that of $T$), and if we assume that air travel is instantaneous (that is, flight times are negligible), there may be up to two transmissions at a time. Therefore, the birth rate matrices are such that the only nonzero entries of the matrix $B'_1$ are $$(B'_1)_{i;ii}=\beta_i\quad\mbox{and}\quad (B'_1)_{i;jj}=(C_1)_{ij},$$ the only nonzero entries of the matrix $B'_2$ are $$(B'_2)_{i;jjj}=(C_2)_{ij},$$ and $$B'_k=0\quad\mbox{for }k\geq3.$$ With these, $B'_1(I\otimes \vc 1)=B'_1(\vc 1\otimes I)=\diag(\vc\beta)+C_1$ and $B'_2(I\otimes\vc 1\otimes \vc1)=B'_2(\vc 1\otimes I\otimes \vc1)=B'_2(\vc1\otimes\vc 1\otimes I )=C_2$,   so that the matrices $\Omega$ and $R$ become
\begin{eqnarray*}\Omega'&=&T'+2\,(\diag(\vc\beta)+C_1)+3\,C_2\\R'&=&-\left[I+T'^{-1}(\diag(\vc\beta)+C_1+C_2)\right]^{-1}\,T'^{-1}(\diag(\vc\beta)+C_1+2C_2).\end{eqnarray*}

\begin{figure}[t!] 
	\begin{center}
	\includegraphics[angle=0,width=8cm]{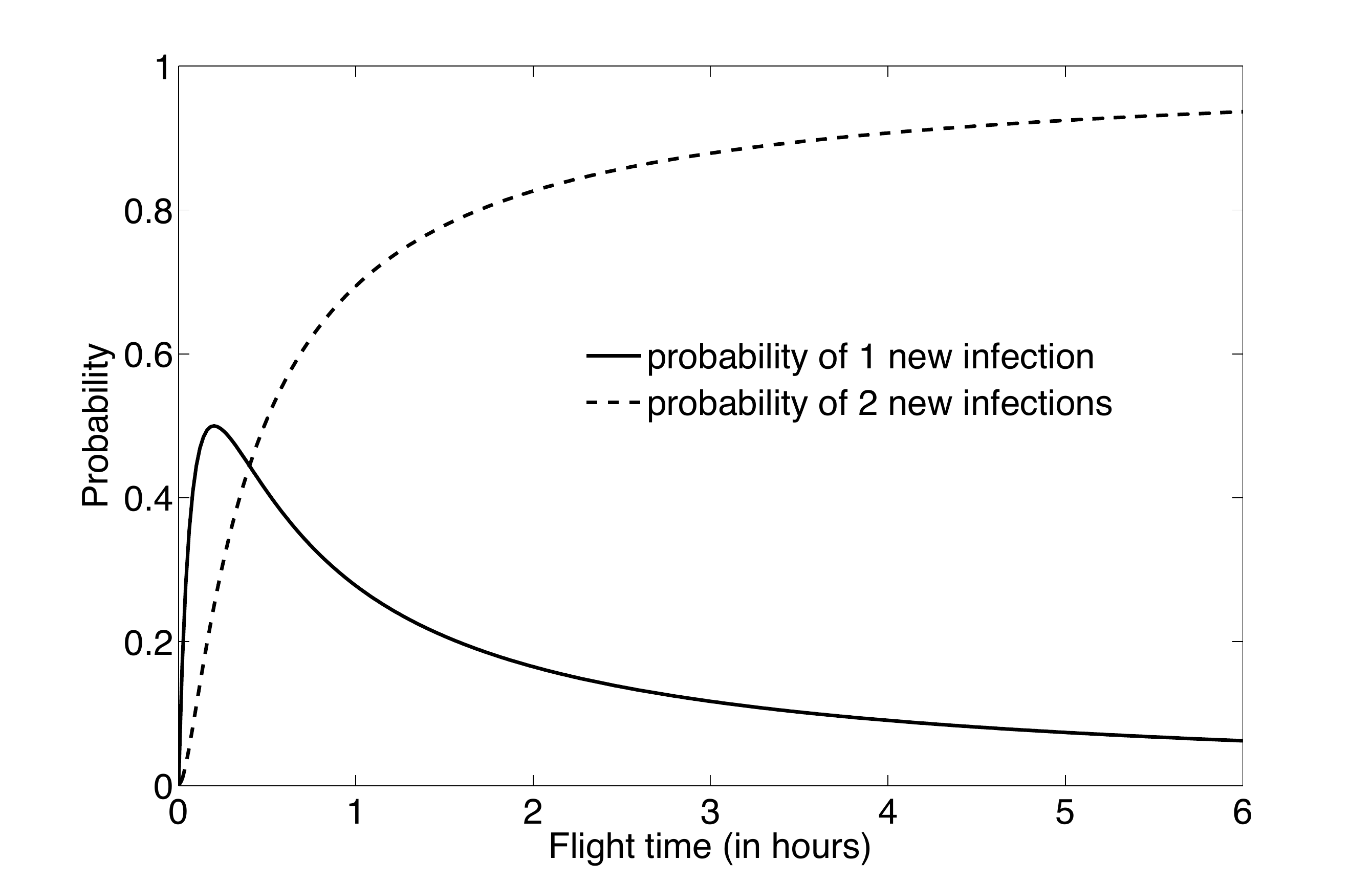} 
	\caption{\label{fiflight}\textbf{On-board transmission probabilities. }Probability that an infected individual infects one or two new individual(s) as a function of the flight time (in hours).}
\end{center}
\end{figure}

Note that our analysis could be adapted to more than two new infections during a flight. This would require to define subsequent matrices $P_i$ and $C_i$ for $i>2$, but would complicate the model unnecessarily for our purpose.

\subsection{Results and Discussion}\label{obtr}
We found that the difference in the global growth of the disease (that is, in $R_0$) is very small when taking on-board transmission into account.
However, at the city level, the difference can be more noticeable, as shown in Table~\ref{tab2b2}: for example, we see that if the disease starts in Orlando, the expected cumulative epidemic size after two weeks is more than twice larger with on-board transmission than without.

 \begin{table}[t]
\begin{center} \caption{\label{tab2b2} Mean cumulative epidemic sizes after 14 days associated with four origin cities if on-board transmission is taken into account. We indicate the ratio of the mean epidemic size with on-board transmission to the corresponding mean size without on-board transmission as  given in Table~\ref{tab2b}.}
\medskip
\begin{tabular}{lcc}
  City $j$ & $D_j(14)$ & Ratio
  \\ 
  \hline
  New York &$ 6.49\cdot10^4$ & $1.0945$\\
  Chicago &$6.70\cdot10^4$ & $1.1347$\\
  San Francisco  &$3.25\cdot10^4$&$1.5231$\\
  Orlando &$1.71\cdot10^4$ & 2.7003
\end{tabular}
\end{center}
\end{table}

In order to emphasize the effects of the on-board transmission mechanism on the elasticities, we shall compare elasticities without and with on-board transmission on the same graphs, even if for some cities the branching process approximation might be less accurate around $t=14$ days when on-board transmission is taken into account. The plain lines in the graphs correspond to the model with on-board transmission, and the dashed lines correspond to the case without on-board transmission.
  
\begin{figure}[t]\begin{center}
 
 \begin{minipage}{0.4\textwidth}
    \begin{figure}[H]
        \hspace*{-1.6cm}  \includegraphics[scale=0.21]{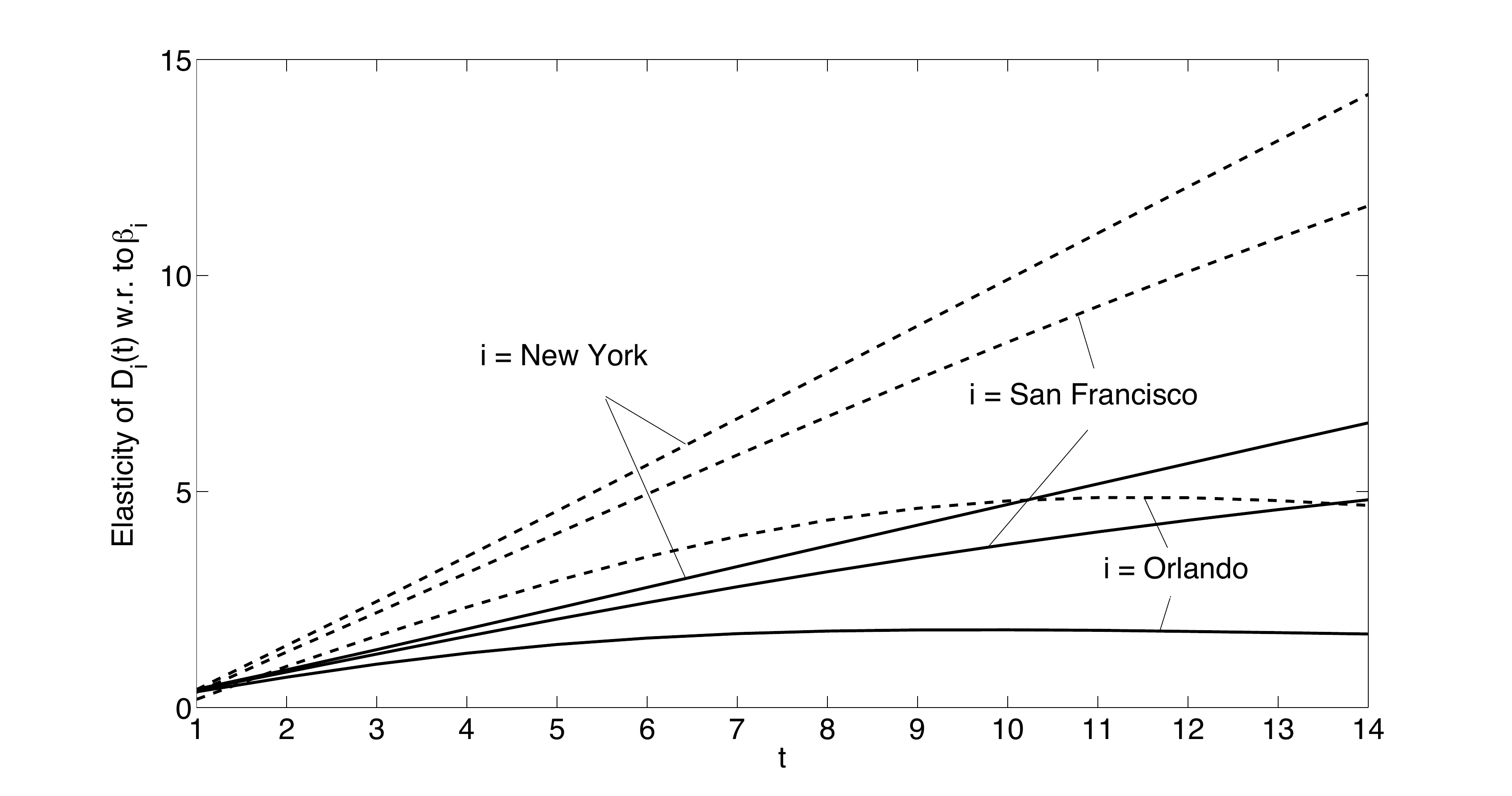}
        \vspace*{-0.5cm}\caption{\label{fielbet3bis} \textbf{Elasticity of the mean cumulative epidemic size with respect to the transmission rates. } 
	 Same graphs as in Figure~\ref{fielbet3}, with on-board transmission; the plain lines correspond to the model with on-board transmission, and the dashed lines correspond to the model without on-board transmission.
	        }
   \end{figure}
\end{minipage}
\hspace{3ex} 
\begin{minipage}{0.4\textwidth}
    \begin{figure}[H]
      \hspace*{-1.3cm}  \includegraphics[scale=0.21]{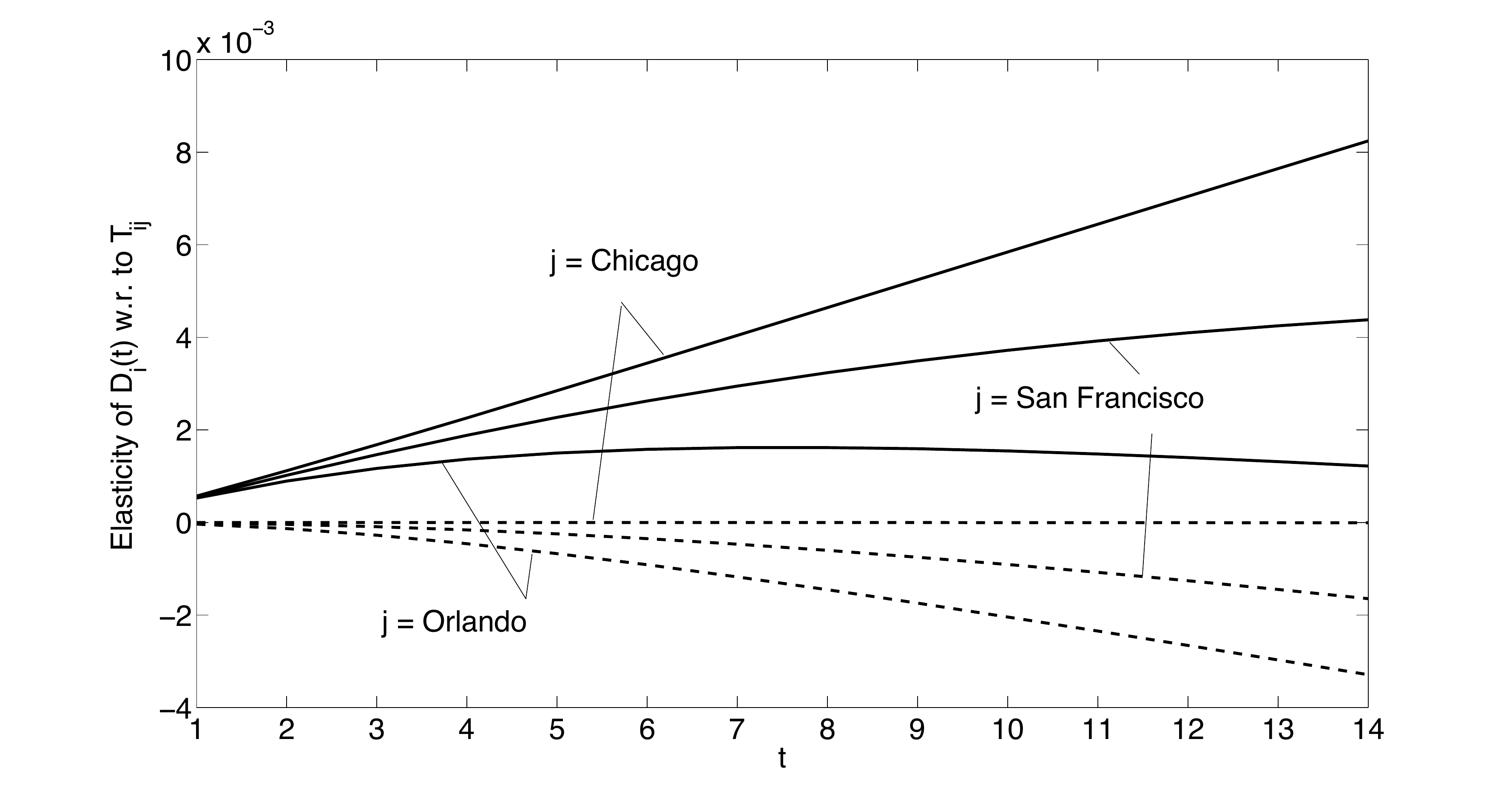}
        \vspace*{-0.5cm}\caption{\label{fielD01bisbis} 
	\textbf{Elasticity of the mean cumulative epidemic size with respect to the travel rates. }
	Same graphs as in Figure~\ref{fielD01bis}, with on-board transmission; the plain lines correspond to the model with on-board transmission, and the dashed lines correspond to the model without on-board transmission. }
       
    \end{figure}
\end{minipage}\vspace*{-0.5cm}
\end{center}
\end{figure}

 \paragraph*{Elasticity of the epidemic size with respect to the transmission rates.}
 
In Figure~\ref{fielbet3bis} we show a comparison of the elasticities of the mean cumulative size of the epidemic with respect to the transmission rates with and without on-board transmission.
We see that in general on-board transmission decreases markedly the elasticity of $D_i(t)$ with respect to $\beta_i$.

 \paragraph*{Elasticity of the epidemic size with respect to the travel rates.}

In Figure~\ref{fielD01bisbis}, we see that the elasticity of the mean cumulative size of the epidemic with respect to the travel rates is larger when on-board transmission is taken into account. This indicates, as expected, that in that case a change in the travel rates would affect much more the dynamics of the disease than when on-board transmission is not taken into account. 
 More precisely, we see that the elasticities start increasing to positive values with respect to the travel rates out of New York, due to the risk of transmission on board airplanes. After a few days, however, the elasticities reach a maximum value and start decreasing to negative values, which is especially clear for Orlando. This indicates that, in the long term and under our on-board transmission assumptions, it would still be beneficial from a sanitary point of view that infected people travel out of New York to cities where the transmission rate is lower.

\paragraph*{Elasticity of $R_0$.} Let $C=C_1+C_2$. The expression given in \eqref{dpR} for $\partial_p R$ simplifies to
\begin{eqnarray}\nonumber\partial_p R'&=&\left[I+T'^{-1}(\diag(\vc\beta)+C)\right]^{-1}\left[T'^{-1}(\partial_p T') T'^{-1}(\diag(\vc\beta)+C)-T'^{-1}\partial_p(\diag(\vc\beta)+C)\right]\cdot\\\label{ro2}&&\left\{\left[I+T'^{-1}(\diag(\vc\beta)+C)\right]^{-1}+I\right\}.\end{eqnarray}

The values of the elasticity of $R_0$ with respect to the different parameters with on-board transmission 
are identical to those obtained in the case without on-board transmission up to the $6$th decimal. We conclude that on-board transmision (as we define it) has a negligible effect on the sensitivity of $R_0$ with respect to these parameters.

\section{Vaccination}\label{apvac}

The vaccination problem is stated as follows: given a population and an epidemiological model, what is the smallest fraction of the population that needs to be vaccinated to prevent the epidemic to break out? And once this number is reached, how many new infections do we prevent with each additional vaccinated individual? We first answer these two questions in the scalar case (uniform vaccination) and the vector case (city-dependent vaccination) without considering on-board transmission. We then discuss what the results become when on-board transmission is taken into account.

\subsection{Uniform vaccination} \label{apvacu}
\paragraph*{Minimal fraction to be vaccinated.}
In our case, if we assume that vaccination is done uniformly over the population of each city, without consideration of age, social status, etc., then vaccination has the same effect as reducing the transmission rate. Let us suppose for now that the vaccination is done in the same proportion over the whole country, and that a fraction $r$ of the population is vaccinated. It leads to a new transmission rate vector $\vc \beta_v = (1-r)\vc \beta$. In the sequel, we write the subscript $v$ each time a quantity depends on this new vector. The minimal fraction of the population to be vaccinated in order to prevent a breakout of the epidemic is then given by the minimization problem:
\begin{equation}\label{opt1}
\begin{aligned}
& \underset{r}{\text{minimize}}
& & r\sum\limits_{i}{N_i}\\
& \text{subject to}
&&\lambda_{max}(\Omega_v)\leq 0.
\end{aligned}
\end{equation}

If we write, for simplicity, $\Delta_{\beta}=\diag(\vc \beta)$, one has $\Omega = T+ 2\Delta_{\beta}$ and $\Omega_v = T+ (2-r)\Delta_{\beta}$.
Since $\Omega_v$ is not symmetric (it results from the sum of $T$, which is not symmetric, and a diagonal correction $(2-r)\Delta_{\beta}$), there is no guarantee that its eigenvalues are real. However, since $T=N^{-1}A$, where $N$ is $\diag(N_i)$ the diagonal matrix of metropolitan populations, and $A$ is a symmetric air travel matrix (as in \cite{grais03}),
 the matrix $N^{1/2}TN^{-1/2}$ is also symmetric and has the same eigenvalues as $T$. Consequently, $\Omega_v$ has the same eigenvalues as $N^{1/2}\Omega_v N^{-1/2}$ which is symmetric too. For symmetric matrices, the condition $\lambda_{max}(\Omega_v)\leq 0$ can be rewritten using Rayleigh quotients as 
$$\frac{\vc v^\top N^{1/2}\Omega_vN^{-1/2}\vc v}{\vc v^\top\vc v}\leq 0, \qquad \forall \vc v:\|\vc v\|^2>0.$$
By expanding the expression of $\Omega_v$ and simplifying, one obtains
$$\vc v^\top N^{1/2}TN^{-1/2}\vc v\leq (r-2)\vc v^\top \Delta_{\beta}\vc v, \qquad \forall \vc v:\|\vc v\|^2>0.$$
If all $\beta_i$ are nonzero, then $\Delta_{\beta}$ is invertible, and we can write $\vc v = \Delta_{\beta}^{-1/2}\vc w$, so that the condition becomes
$$\vc w^\top\Delta_{\beta}^{-1/2}N^{1/2}TN^{-1/2}\Delta_{\beta}^{-1/2}\vc w\leq (r-2)\vc w^\top \vc w,\qquad \forall \vc w: \|\vc w\|^2 >0.$$
By dividing both sides by $\vc w^\top\vc w$, one obtains
$$\frac{\vc w^\top N^{1/2}\Delta_{\beta}^{-1/2}T\Delta_{\beta}^{-1/2}N^{-1/2}\vc w}{\vc w^\top\vc w}\leq (r-2),\qquad \forall \vc w: \|\vc w\|^2 >0,$$
which can finally be translated into 
\begin{equation}\label{eq:scalcond}2+\lambda_{max}(\Delta_{\beta}^{-1}T)\leq r.\end{equation}
In our case, we conclude that $r$ should be larger than $0.6917$, so at least $69.17\%$ of the population has to be vaccinated in order to prevent the epidemic to break out.


%

\subsection{City-dependent vaccination}\label{apvaccd}

\paragraph*{Minimal fraction to be vaccinated.}
When vaccination is not done uniformly over the population but varies from city to city, the vaccination ratio is described by a vector $\vc r$, where $r_i$ is the fraction of the population of city $i$ that is vaccinated. If we define $V=\diag(\vc r)$, the optimisation problem becomes
\begin{equation}\label{opt2}
\begin{aligned}
& \underset{\vc r}{\text{minimize}}
& & \sum\limits_{i}{r_iN_i}\\
& \text{subject to}
&&\lambda_{max}(\Omega_v)\leq 0,
\end{aligned}
\end{equation}
with $\Omega_v = T + (2I-V)\Delta_{\beta}$. Again, this matrix is not symmetric but it has the same eigenvalues as $N^{1/2}\Omega_vN^{-1/2}$ which is symmetric.
The constraint $\lambda_{max}(\Omega_v)\leq 0$ can then be written as
$$\vc v^\top N^{1/2}(T+(2I-V)\Delta_{\beta})N^{-1/2}\vc v\leq 0 \qquad\forall \vc v:\|\vc v\|^2 > 0.$$
By writing again $\vc w=\Delta_{\beta}^{1/2}\vc v$, one obtains after some elementary manipulations
\begin{equation}\label{eq:sdp}
\frac{\vc w^\top [V - N^{1/2}(\Delta_{\beta}^{-1/2}T\Delta_{\beta}^{-1/2}+2I)N^{-1/2}]\vc w}{\vc w^\top\vc w} \geq 0 \qquad\forall \vc w:\|\vc w\|^2 >0,
\end{equation}
which simply means that $V-N^{1/2}(\Delta_{\beta}^{-1/2}T\Delta_{\beta}^{-1/2}+2I)N^{-1/2}$ has to be positive semi-definite.

In general, the solution of such an optimization problem is not trivial, but it can be solved using \textit{Linear Matrix Inequality} (\textit{LMI}) solvers. However, in this case, we can derive a fairly accurate approximation. 
Recall that the matrix $T$ contains on its off-diagonal elements the travel rates between cities, while the diagonal elements are given by 
$T_{ii} = -\beta_i -d_i -\sum\limits_{j\neq i}{T_{ij}}.$ In the present situation, the travel rates are small with respect to transmission and recovery rates, and we can approximate $T\approx\diag(-\vc\beta -\vc d)$. Then, the condition becomes that $V-\Delta_{\beta}^{-1}T-2I$ is positive semi-definite. Since we want to minimize a weighted sum of the elements of $V$, with only positive weights, this reduces to solve $r_i - \beta_i^{-1}(-\beta_i-d_i) - 2 \geq 0$ for each $i$, which immediately gives
\begin{equation}\label{eq:simplified}
r_i\geq1-(d_i/\beta_i).
\end{equation} 

We solved the original problem with condition (\ref{eq:sdp}), using the Matlab software \texttt{CVX} \cite{gb08,cvx} for dealing with the \textit{LMI} condition, and we denote the solution by $\vc{r}_{cvx}$. The values of the vector $\vc{r}_{cvx}$ range between $0.5965$ and $0.6929$, with a mean of $0.6574$.
We compared this solution with the solution obtained in (\ref{eq:simplified}), denoted by $\vc{r}_{approx}$. In all cities the optimal vaccination fractions obtained with both methods are very similar: the largest relative error between the entries of $\vc{r}_{cvx}$ and $\vc{r}_{approx}$ is 1\%. There is a clear correlation between the intensity of the travel rates and the relative error; the approximation is excellent in weakly connected cities and less accurate in the most connected cities.

However, this approximation is valid in our case because travel rates are very small with respect to the transmission and removal rates $\vc\beta$ and $\vc d$. This is no longer the case when travel has a larger impact than here, that is, when commuting traffic is added, or if on-board transmission is taken into account, as shown in Appendix~\ref{apvacobt}.

\paragraph*{Sensitivity of $\vc D$ with respect to $\vc r$.}

Now, the sensitivity of vector $\vc D$ with respect to $r_i$ (the other $r_j$ for $j\neq i$ being held constant) is given by $$\frac{\partial \vc D}{\partial r_i} =(-\Omega_v)^{-1}\,\frac{\partial \Omega_v}{\partial r_i}\,(-\Omega_v)^{-1}\,\vc d,$$ where the only nonzero entry of the matrix $\partial \Omega_v/\partial r_i$ is the entry $(i,i)$ equal to $-\beta_i$, and $\partial r_i = 1/N_i$.
In this case, when the proportion $\vc r$ of vaccinated individuals in each city is large enough to prevent an epidemic to outbreak, the effect of an additional vaccine in city $i$ is then given by
$$\partial \vc D =\frac{1}{N_i}(-\Omega_v)^{-1}\,\frac{\partial \Omega_v}{\partial r_i}\,(-\Omega_v)^{-1}\,\vc d.$$

Similar to the uniform case, the value of $\partial D_i$ decreases rapidly when $\vc r$ moves away from $\vc r_c$, as
shown in Figure~\ref{fivac2}, which depicts the number of people who would escape from the disease if we introduce one additional vaccine {in the origin city of the disease}, as a function of $(\vc r-\vc r_c)_i\times N_i$, the initial additional number of vaccinated people with respect to the critical number in the city, $(\vc r_c)_i N_i$. 
\begin{figure}[t!] 
	\begin{center}
	\includegraphics[angle=0,width=9cm]{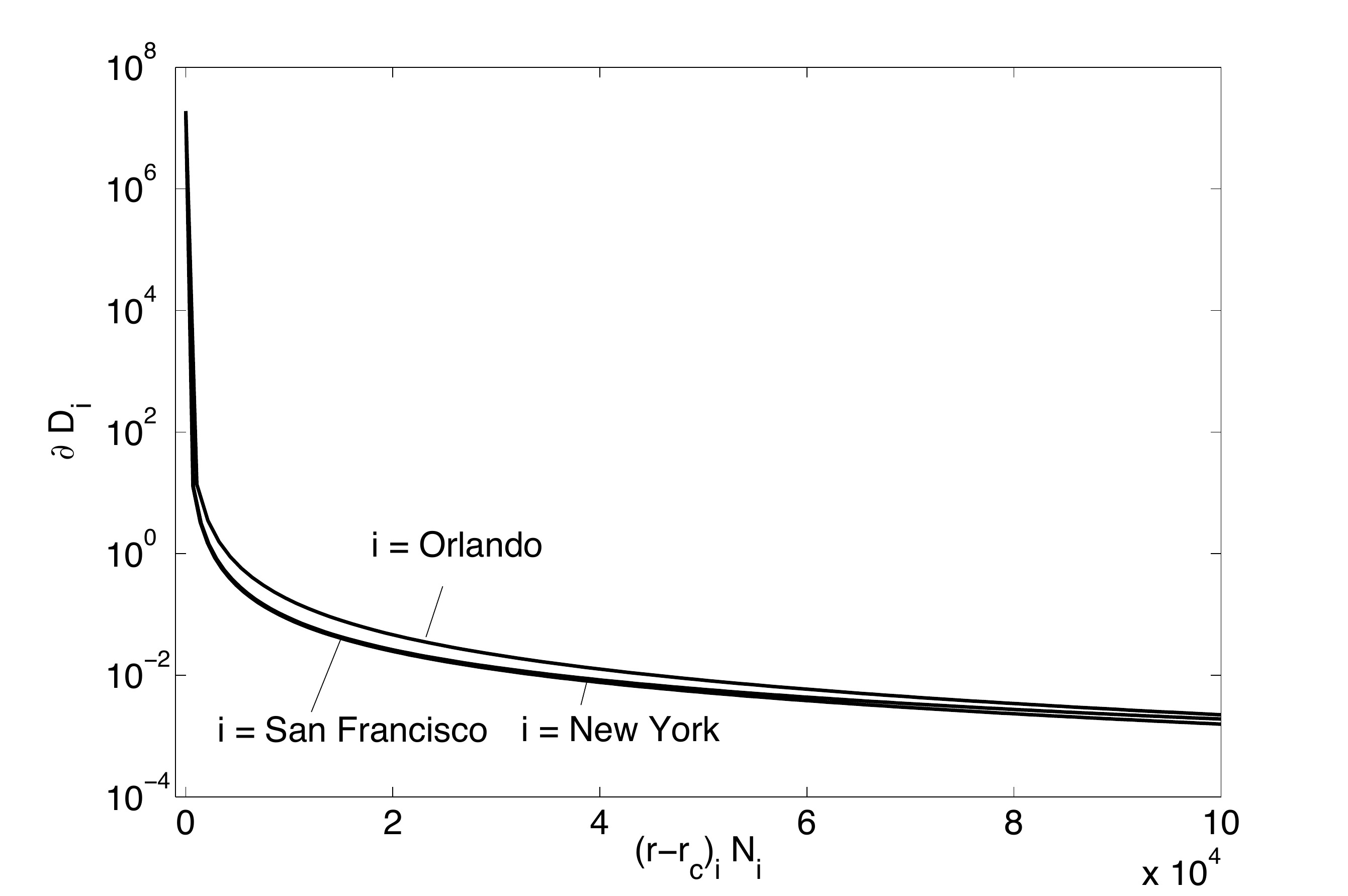} 
	\caption{\label{fivac2}\textbf{City-dependent vaccination. }Number of prevented infections per additional vaccine in the origin city of the disease as a function of the difference between the initial number of vaccinated people in the city ($r_i N_i$) and the critical vaccinated population in the city ($(\vc r_c)_i N_i$), for three origin cities of the disease.}
\end{center}
\end{figure}

\begin{rem}If we assume that the vaccination campaign only applies to the 114 cities considered, then the results in this section show that, in order to prevent the epidemic to break out, at least $r_c\,\sum_iN_i=1.64\cdot10^8$ individuals have to be vaccinated in the uniform case, as opposed to $\sum_i (\vc r_c)_i\,N_i=1.57\cdot 10^8$ individuals in the city-dependent case. This highlights the benefit of doing city-dependent vaccination, since approximately seven million vaccinations less are needed to avoid an epidemic outbreak.\end{rem}

\subsection{Effect of on-board transmission}\label{apvacobt}
The optimal vaccination fraction can be estimated in a similar way when on-board transmission is taken into account.
The optimization problems \eqref{opt1} and \eqref{opt2} stay unchanged, except for the expression of $\Omega_v$, which is now given by $$\Omega_v=T'+(2I-V)\Delta_\beta+2C_1+3C_2.$$ For simplification reasons, we write $T_1 = T'+2C_1+3C_2,$ so that $\Omega_v=T_1+(2I-V)\Delta_\beta.$

Note that similar to $T$, $T_1$ is not symmetric, but can be written as the product of $N^{-1}$ and a symmetric matrix, so that $T_1$ has the same eigenvalues as $N^{1/2}T_1N^{-1/2}$, which is a symmetric matrix (the same holding for $T'$, $C_1$ and $C_2$).

\paragraph*{Uniform vaccination.}
The same manipulation as in the case without on-board transmission provides the condition equivalent to (\ref{eq:scalcond}): $\lambda_{max}(\Omega_v)\leq 0$ if and only if
\begin{equation}
2+\lambda_{max}(\Delta_\beta^{-1}T_1) \leq r.
\end{equation}
In our case, $r$ should be larger than 0.6960, which is a bit larger than the minimal vaccination fraction without on-board transmission; this shows again the influence of on-board transmissions on the size of the epidemic.

\paragraph*{City-dependent vaccination.}
Similar to the city-dependent case without on-board transmission, the condition $\lambda_{max}(\Omega_v)\leq 0$ of the optimisation problem \eqref{opt2} can be finally written as that $$V - N^{1/2}(\Delta_\beta^{-1/2}T_1\Delta_\beta^{-1/2} + 2I)N^{-1/2}$$ has to be positive semi-definite. 

The optimisation problem can be solved using \texttt{CVX} and be compared with the simplified solution $\vc{r}_{approx}$ obtained when assuming that the travel rates are negligible. The obtained vaccination rates $\vc{r}_{cvx}$ are higher than in the case without on-board transmission: their values range between $0.6015$ and $0.7824$ with a mean of $0.6682$. Strikingly, since the influence of air travel is enhanced by the on-board transmissions, we observe larger differences between the solution $\vc{r}_{cvx}$ returned by \texttt{CVX} and the simplified solution $\vc{r}_{approx}$. In the present case, 55 cities have a relative error between $\vc{r}_{cvx}$ and $\vc{r}_{approx}$ larger than 1\%  (the largest relative error is 12\%), and there is a very clear trend between the connectivity of a city and the error; the most connected cities have a large relative error, while the most isolated cities have a well-approximated vaccination rate.

\section{Complete data}\label{apcd}

Table~\ref{fig:data} provides the complete list of the 114 American cities considered in this paper, with their corresponding metropolitan population (2011 estimates of the United States Census Bureau), and automn-winter transmission rate. The four cities which illustrate our sensitivity analysis in Section~\ref{illu} are in bold. 

The $114\times114$ travel rate matrix is too large to be represented as a whole and can be computed according to \eqref{tr} using the average daily number of passengers for each city pair obtained from the Domestic Airline Fares Consumer Report of the US Department of Transportation for the first Quarter of 2011 \cite{DOT}.

\begin{table}[t]\scriptsize
\begin{center}\caption{\label{fig:data}\normalsize{The 114 cities considered, together with their metropolitan population $N_i$ (based on the 2011 estimates of the United States Census Bureau), and their winter transmission rate $\beta_i$ (based on \cite{grais03}).} }
\medskip
\tabcolsep=0.13cm\begin{tabular}{lll|lll}
City $i$	&	$N_i	$&	$\beta_i$	&City $i$	&	$N_i	$&	$\beta_i$	\\\hline
	&		&		&			&&			\\
Albany	&	857592	&	1.1	&	Little Rock	&	685488	&	1.02	\\
Albuquerque	&	857903	&	1.02	&	Los Angeles Metro Area 4/	&	12870000	&	1.02	\\
Allentown/Bethlehem	&	816012	&	1.1	&	Louisville	&	1259000	&	1.02	\\
Amarillo	&	246474	&	0.85	&	Lubbock	&	276659	&	0.85	\\
Aspen; CO (urban)	&	6658	&	1.1	&	Madison	&	570025	&	1.1	\\
Atlanta	&	5475000	&	1.02	&	Medford	&	201286	&	1.02	\\
Atlantic City	&	271712	&	1.02	&	Memphis	&	1305000	&	1.02	\\
Austin	&	1705000	&	0.85	&	Miami/Ft. Lauderdale 4/	&	5547000	&	0.85	\\
Baltimore/Washington 4/	&	2691000	&	1.02	&	Midland/Odessa	&	266941	&	0.85	\\
Baton Rouge; LA	&	786947	&	0.85	&	Milwaukee	&	1560000	&	1.1	\\
Bellingham	&	200434	&	1.1	&	Minneapolis	&	3270000	&	1.1	\\
Billings; MT	&	154553	&	1.1	&	Mission/McAllen/Edinburg	&	741152	&	0.85	\\
Birmingham	&	1131000	&	1.02	&	Moline	&	379066	&	1.1	\\
Bloomington/Normal	&	167699	&	1.1	&	Myrtle Beach	&	263868	&	1.02	\\
Boise	&	606376	&	1.1	&	Nashville	&	1582000	&	1.02	\\
Boston/Providence 4/	&	6190000	&	1.1	&	New Orleans	&	1190000	&	0.85	\\
Buffalo	&	1124000	&	1.1	&	\textbf{New York Metro Area 4/	}&	\textbf{19070000}	&	\textbf{1.1}	\\
Burlington	&	208055	&	1.1	&	Newburgh/Poughkeepsie	&	677094	&	1.1	\\
Cedar Rapids/Iowa City; IA	&	450462	&	1.1	&	Newport News/Williamsburg	&	1674000	&	1.02	\\
Charleston	&	659191	&	0.85	&	Norfolk 	&	1647000	&	1.02	\\
Charlotte	&	1746000	&	1.02	&	Oklahoma City	&	1227000	&	1.02	\\
\textbf{Chicago Metro Area 4/}	&	\textbf{9581000}	&	\textbf{1.1}	&	Omaha	&	849517	&	1.1	\\
Cincinnati; KY	&	2172000	&	1.02	&	\textbf{Orlando}	&	\textbf{2082000}	&	\textbf{0.85	}\\
Cleveland/Akron 4/	&	2790935	&	1.1	&	Palm Springs; CA	&	4143000	&	0.85	\\
Colorado Springs	&	626227	&	1.1	&	Panama City; FL	&	164767	&	0.85	\\
Columbia; SC	&	744730	&	0.85	&	Pasco/Kennewick/Richland; WA	&	245649	&	1.1	\\
Columbus	&	1802000	&	1.1	&	Pensacola; FL	&	455102	&	0.85	\\
Corpus Christi	&	416095	&	0.85	&	Philadelphia	&	5968000	&	1.1	\\
Dallas/Fort Worth 4/	&	6448000	&	0.85	&	Phoenix	&	4364000	&	0.85	\\
Dayton	&	835063	&	1.1	&	Pittsburgh	&	2355000	&	1.1	\\
Denver	&	2552000	&	1.1	&	Plattsburgh; NY	&	81618	&	1.1	\\
Des Moines	&	562906	&	1.1	&	Portland	&	2242000	&	1.02	\\
Detroit	&	4403000	&	1.1	&	Raleigh/Durham	&	1627228	&	1.02	\\
Eagle; CO	&	61699	&	1.1	&	Reno	&	419261	&	1.1	\\
El Paso	&	751296	&	0.85	&	Richmond	&	1238000	&	1.02	\\
Eugene; OR	&	351109	&	1.02	&	Rochester	&	1036000	&	1.1	\\
Fargo; ND	&	200102	&	1.1	&	Sacramento	&	2127000	&	1.02	\\
Fayetteville; AR	&	464623	&	0.85	&	Salt Lake City	&	1130000	&	1.1	\\
Flint	&	424043	&	1.1	&	San Antonio	&	2072000	&	0.85	\\
Fort Myers	&	586908	&	0.85	&	San Diego	&	3054000	&	1.02	\\
Fresno; CA	&	915267	&	0.85	&	\textbf{San Francisco/Oakland 4/}	&	\textbf{4318000}	&	\textbf{1.02}	\\
Grand Rapids	&	778009	&	1.1	&	{Santa Barbara; CA}	&	\textbf{407057}	&	\textbf{0.85}	\\
Greensboro/High Point	&	714765	&	1.02	&	Santa Rosa; CA	&	472102	&	0.85	\\
Harlingen/San Benito	&	396371	&	0.85	&	Sarasota/Bradenton	&	688126	&	0.85	\\
Harrisburg	&	536919	&	1.1	&	Savannah; GA	&	343092	&	0.85	\\
Hartford	&	1196000	&	1.1	&	Seattle	&	3408000	&	1.1	\\
Houston	&	5867000	&	0.85	&	Sioux Falls; SD	&	238122	&	1.1	\\
Huntsville	&	406316	&	1.02	&	Spokane	&	468684	&	1.1	\\
Indianapolis	&	1744000	&	1.1	&	St. Louis	&	2829000	&	1.02	\\
Islip	&	19070000	&	1.02	&	Syracuse	&	646084	&	1.1	\\
Jackson/Vicksburg	&	589041	&	1.02	&	Tallahassee; FL	&	360013	&	0.85	\\
Jacksonville	&	1328000	&	0.85	&	Tampa	&	2747000	&	0.85	\\
Kansas City	&	2068000	&	1.02	&	Tucson	&	1020000	&	1.02	\\
Key West; FL	&	73165	&	0.85	&	Tulsa	&	929015	&	1.02	\\
Knoxville	&	699247	&	1.02	&	West Palm Beach/Palm Beach	&	5547000	&	0.85	\\
Las Vegas	&	1903000	&	0.85	&	White Plains	&	19070000	&	1.1	\\
Lexington	&	470849	&	1.02	&	Wichita	&	612683	&	1.02	\\
\end{tabular}\end{center}\end{table}
\normalsize







%

\end{document}